\newtheorem{Lem}{Lemma}
\newtheorem{theorem}{Theorem}
\def\rank{\operatorname{rank}}
\def\edge{\operatorname{edge}}
\def\branch{\operatorname{branch}}
\def\induced{\operatorname{induced}}
\title{Faster Deterministic Fully-Dynamic Graph Connectivity}
\author{Christian Wulff-Nilsen
        \footnote{Department of Computer Science,
                  University of Copenhagen,
                  \texttt{koolooz@diku.dk},
                  \texttt{http://www.diku.dk/$_{\widetilde{~}}$koolooz/}.}}
\date{}
\begin{document}

\maketitle
\begin{abstract}
We give new deterministic bounds for fully-dynamic graph connectivity.
Our data structure supports updates (edge insertions/deletions) in
$O(\log^2n/\log\log n)$ amortized time and connectivity queries in
$O(\log n/\log\log n)$ worst-case time, where $n$ is the number of vertices of the graph.
This improves the deterministic data structures of Holm, de Lichtenberg, and Thorup
(STOC 1998, J.ACM 2001) and Thorup (STOC 2000) which both have $O(\log^2n)$ amortized
update time and $O(\log n/\log\log n)$ worst-case query time. Our model of computation
is the same as that of Thorup, i.e., a pointer machine with standard $AC^0$ instructions.
\end{abstract}
\newpage

\section{Introduction}\label{sec:Intro}
The dynamic graph connectivity problem is perhaps the most fundamental
dynamic graph problem and has received considerable attention from the
algorithms community for decades. The goal is to build an efficient
data structure that supports one or more of the following operations in a
dynamic graph $G$:
\begin{itemize}
\item \texttt{connected}$(u,v)$: determines whether vertices $u$ and $v$ are connected in $G$,
\item \texttt{insert}$(u,v)$: inserts edge $(u,v)$ in $G$,
\item \texttt{delete}$(u,v)$: deletes edge $(u,v)$ from $G$.
\end{itemize}
The fully-dynamic graph connectivity problem supports all three operations, whereas the
simpler decremental and incremental variants do not support \texttt{insert} and \texttt{delete},
respectively. In the following, we refer to both \texttt{insert} and \texttt{delete} as update
operations.

The first non-trivial data structure for fully-dynamic graph connectivity is due to
Frederickson~\cite{Frederikson} who showed how to support updates in $O(\sqrt m)$ time
and connectivity queries in $O(1)$ time, where $m$ is the number of edges of the graph.
Using a general sparsification technique, Eppstein, Galil, Italiano, and
Nissenzweig~\cite{Sparsification} improved
update time to $O(\sqrt n)$, where $n$ is the number of vertices. Both of these
data structures are deterministic and the time bounds are worst-case.

Henzinger and King~\cite{HK95} significantly improved update time to $O(\log^3n)$ with
only a small increase in query time to $O(\log n/\log\log n)$. However, their bounds are
randomized expected and update time is amortized. Using a clever sampling technique, Henzinger and
Thorup~\cite{HT97} shaved off a factor of $\log n$ in the update time. A simple
and elegant deterministic data structure with the same bounds as in~\cite{HT97} was
given by Holm, de Lichtenberg, and Thorup~\cite{HLT01}. Its space requirement was later improved to linear
by Thorup~\cite{Thorup00} who also gave a randomized data structure with
a faster update time of $O(\log n(\log\log n)^3)$ and marginally slower query time of
$O(\log n/\log\log\log n)$.

A general cell-prove lower bound of $\Omega(\log n)$ for fully-dynamic graph connectivity
was provided by P\u{a}tra\c{s}cu and Demaine~\cite{PatrascuDemaine}. Hence, the data
structures above are near-optimal.

As shown by Tarjan~\cite{Tarjan}, incremental connectivity is the union-find problem
which can be solved in $O(\alpha(m,n))$ time over $m$ updates. Thorup~\cite{Thorup97}
gave an $O(\log n)$ bound for decremental connectivity if the initial graph has
$\Omega(n\log^5n)$ edges. He also gave an $O(1)$ bound if the initial graph is dense.

Our contribution is a deterministic data structure that improves the update time of
the deterministic data structures in~\cite{HLT01,Thorup00} by a factor of $\log\log n$.
We use several ingredients of Thorup~\cite{Thorup00}, including his structural forest
(which we refer to as a cluster forest) as well as lazy local trees and shortcuts both of
which we modify to suit our needs. We also introduce an additional system of shortcuts that
allows us to more quickly walk up trees of the cluster forest. Together, these changes and additions give an order $\log\log n$ speed-up in the update time.

Table~\ref{tab:DataStructs} summarizes the results for fully-dynamic graph connectivity.

\begin{table*}
\begin{center}
\begin{tabular}{|c|c|c|c|}
\hline
Update time & Query time & Type & Reference\\
\hline
$O(\sqrt m)$ & $O(1)$ & deterministic; worst-case time & ~\cite{Frederikson}\\
\hline
$O(\sqrt n)$ & $O(1)$ & deterministic; worst-case time & ~\cite{Sparsification}\\
\hline
$O(\log^3n)$ & $O(\log n/\log\log n)$ & randomized; amortized update time& ~\cite{HK95}\\
\hline
$O(\log^2n)$ & $O(\log n/\log\log n)$ & randomized; amortized update time & ~\cite{HT97}\\
\hline
$O(\log^2n)$ & $O(\log n/\log\log n)$ & deterministic; amortized update time & ~\cite{HLT01,Thorup00}\\
\hline
$O(\log n(\log\log n)^3)$ & $O(\log n/\log\log\log n)$ & randomized; amortized update time & ~\cite{Thorup00}\\
\hline
$O(\log^2n/\log\log n)$ & $O(\log n/\log\log n)$ & deterministic; amortized update time & This paper\\
\hline
\end{tabular}
\end{center}
\caption{Performance of data structures for fully-dynamic graph connectivity.}\label{tab:DataStructs}
\end{table*}

Our paper is organized as follows. In Section~\ref{sec:DefsNot}, we introduce basic definitions and notation.
Section~\ref{sec:SimpleDataStruct} gives a simple data structure with $O(\log^2n)$ update time and $O(\log n)$
query time. It is essentially the deterministic data structure of Thorup~\cite{Thorup00} but a slightly more minimalistic
variant that does not need to maintain spanning trees of connected components. In Section~\ref{sec:ImprovedDataStruct},
we add two systems of shortcuts to our data structure that together improve both update and query time by a factor
of $\log\log n$. The simplification given in Section~\ref{sec:SimpleDataStruct} is not important in order to
get our improvement in Section~\ref{sec:ImprovedDataStruct}; indeed, our result can easily be extended to maintain
a spanning forest. However, we believe that our approach gives a slightly cleaner analysis
and it should give a small improvement in practice. For instance, Thorup's data structure needs to maintain two
types of bitmaps for edges, one for tree edges and one for non-tree edges whereas our data structure only needs to maintain
one type; see Section~\ref{subsec:SearchingEdges} for details. Finally, we give some concluding remarks in
Section~\ref{sec:ConclRem}.

\section{Preliminaries}\label{sec:DefsNot}
Let $\log$ denote the base $2$ logarithm.
We assume the same model of computation as in~\cite{Thorup00}, i.e., a pointer machine with
words (bitmaps) containing at least $\lfloor\log n\rfloor + 1$ bits and with the following
standard $AC^0$ instructions: addition, subtraction, bitwise 'and', 'or', and 'not', and bit shifts. Our
data structure also needs to perform division $x/y$ and multiplication $xy$ which are \emph{not} $AC^0$ instructions.
To handle this, we assume that $y$ is a power of $2$ so that a bit shift operation can be used instead;
we can always round $y$ to the nearest such value and the constant multiplicative error introduced will not
affect correctness or running time. For a bitmap $m$, we denote by $m[i]$ the $i$th bit of $m$, $i\geq 0$.

We let $G$ denote the input graph and it is assumed to contain no edges initially. To distinguish between vertices of
$G$ and other vertices (such as those in trees of our data structure), we refer to the latter as nodes.
For a path $P$ and nodes $a,b\in P$, $P[a,b]$ is the subpath of $P$ between $a$ and $b$.
We abbreviate balanced binary search tree as BBST and depth-first search as DFS.

\section{A Simple Data Structure}\label{sec:SimpleDataStruct}
We first give a simplified version of our data structure which is no
better than the deterministic data structures in~\cite{HLT01} and~\cite{Thorup00}. In fact, it has a slower query time
of $O(\log n)$. In Section~\ref{sec:ImprovedDataStruct}, we shall speed up both query and update time by a factor of
$\log\log n$.

\subsection{The cluster forest}\label{subsec:ClusterForest}
As in~\cite{Thorup00}, we assign to each edge $e$ of $G$ a \emph{level} $\ell(e)$ between $0$ and
$\ell_{\max} = \lfloor\log n\rfloor$ and for $0\leq i\leq l_{\max}$, we denote by $G_i$ the subgraph of $G$ induced
by edges $e$ with $\ell(e)\geq i$. We refer to the connected components of $G_i$ as \emph{level $i$ clusters} or just
\emph{clusters}. The following invariant will be maintained by our data structure:
\begin{description}
\item [Invariant:] For each $i$, any level $i$ cluster spans at most $\lfloor n/2^i\rfloor$ vertices.
\end{description}

The \emph{cluster forest} of $G$ is a forest $\mathcal C$ of rooted trees where each node $u$ corresponds to a cluster $C(u)$.
The \emph{level} $\ell(u)$ of $u$ is its depth in $\mathcal C$ (between $0$ and $\ell_{\max}$) and a level $i$ node
corresponds to a level
$i$ cluster. In particular, roots of $\mathcal C$ correspond to level $0$ clusters and hence to the connected components
of $G_0 = G$. By the invariant, each leaf of $\mathcal C$ corresponds to a vertex of $G$; we often identify the two
and our data structure keeps bidirected pointers between them.
A node $u$ at a level $i < \ell_{\max}$ has as children the level $(i+1)$ nodes $v$ such that $C(v)\subseteq C(u)$.

Our data structure will maintain, for each node $u$ of $\mathcal C$, an integer $n(u)$ denoting the number of
leaves in the subtree of $\mathcal C$ rooted at $u$. In other words, $n(u)$ is the number of vertices of $G$ spanned by
$C(u)$. This completes the description of the cluster forest.

Given $\mathcal C$, we can determine whether two vertices $u$ and $v$ are connected in $G$ in $O(\log n)$ time as follows.
Traverse paths from $u$ and $v$ to roots $r_u$ and $r_v$ of trees of $\mathcal C$ containing $u$ and $v$, respectively.
Then $u$ and $v$ are connected iff $r_u = r_v$.

\subsection{Handling insertions}\label{subsec:Insertion}
We need to maintain $\mathcal C$ as $G$ changes. First we describe how to update $\mathcal C$ after an operation
\texttt{insert}$(u,v)$. We initialize $\ell(u,v)\leftarrow 0$. Letting $r_u$ and $r_v$ be defined as above, if
$r_u = r_v$, no
update of $\mathcal C$ is required since $u$ and $v$ were already connected in $G = G_0$.
If $r_u\neq r_v$, we update $\mathcal C$ by \emph{merging $r_u$ and $r_v$ into $r_u$},
meaning that $r_u$ inherits the children of $r_v$, $n(r_u)$ is increased by $n(r_v)$, and $r_v$ is deleted. This update
corresponds to merging $C(r_u)$ and $C(r_v)$. Thus $\mathcal C$ is correctly updated and the invariant still holds.

\subsection{Handling deletions}\label{subsec:Deletion}
Now consider the update \texttt{delete}$(u,v)$. Let $i = \ell(u,v)$ and let $C_u$ and $C_v$ be the
level $(i+1)$ clusters containing $u$ and $v$, respectively. Assume that $C_u\neq C_v$ since otherwise, the connectivity
in $G_i$ is not affected (there is a $uv$-path in $G_{i+1}\subset G_i$ connecting $u$ and $v$).
Let $M_i$ be the multigraph with level $(i+1)$ clusters as vertices and with level $i$-edges of
$G$ as edges (so an edge of $M_i$ connects two vertices if that edge connects the corresponding level $(i+1)$ clusters
in $G_i$). Our algorithm will not actually keep $M_i$ but it will help to simplify the description in this subsection.

We now execute two standard search procedures (say, DFS) in $M_i$, one, $P_u$, starting from vertex $C_u$ and
another, $P_v$, starting from $C_v$. The two procedures are executed in ``parallel'' by alternating between the two,
i.e., one unit of time is spent on $P_u$, then one unit on $P_v$, then one unit on $P_u$, and so on. We terminate
both procedures as soon as we are in one of the following two cases (which must happen at some point):
\begin{enumerate}
\item a vertex of $M_i$ is explored by both procedures,
\item one of the procedures has no more edges to explore and we are not in case $1$.
\end{enumerate}

In the following, we show how to deal with these two cases.

\paragraph{Case $1$:}
Let $C_{uv}$ be the vertex (level $(i+1)$ cluster) of $M_i$ explored by both procedures. Assume w.l.o.g.~that $P_v$ was the
last to explore $C_{uv}$. Let $\mathcal C_u$ be the set of level $(i+1)$ clusters explored by $P_u$ and let $\mathcal C_v$
be the set of level $(i+1)$ clusters explored by $P_v$, excluding $C_{uv}$. If we let
$n_u = \sum_{C\in\mathcal C_u}n(C)$ and $n_v = \sum_{C\in\mathcal C_v}n(C)$ then since
$\mathcal C_u\cap\mathcal C_v = \emptyset$, we have $n_u + n_v\leq\lfloor n/2^i\rfloor$ by our invariant and thus
$\min\{n_u, n_v\}\leq \lfloor n/2^{i+1}\rfloor$.

Assume first that $n_u\leq n_v$. Then we can increase the level of every edge explored by $P_u$
from $i$ to $i+1$ without violating the invariant. To see this, note that the level updates correspond
to merging clusters of $\mathcal C_u$ into one level $(i+1)$ cluster spanning
$n_u\leq\lfloor n/2^{i+1}\rfloor$ vertices. The idea is that the search performed by $P_u$ is paid for by these level
increases. As $P_v$ spent the same amount of time as $P_u$ (up to an additive $O(1)$ term), the level increases also pay
for the search by $P_v$.

We need to update cluster forest $\mathcal C$ accordingly. When increasing the level of an edge $e$ from $i$ to $i+1$,
we identify the level $(i+1)$-ancestors $a$ and $b$ of the endpoints of $e$. Clusters $C(a)$ and $C(b)$ need to be
merged and we do this by merging $a$ and $b$ into $b$ and updating $n(a)\leftarrow n(a) + n(b)$.
As we will see later, this update can also be paid for by the level
increase of $e$. Note that the procedures have found a
replacement path in $G_i$ for deleted edge $(u,v)$ so no further updates are required in $\mathcal C$, and we terminate.

We assumed above that $n_u\leq n_v$. We do exactly the same for clusters in $\mathcal C_v$ when
$n_v\leq n_u$ except that we do not increase the level of the last edge explored by $P_v$ as it
connects to a cluster in $\mathcal C_u$. If this was the only edge explored, there are no edges to pay for it but
in this case we have found a replacement path for edge $(u,v)$ and the \texttt{delete}$(u,v)$-operation can pay.

\paragraph{Case $2$:}
Now assume that one of the procedures, say $P_u$, explores all edges in the connected component of $M_i$ containing
$C_u$ and that we are not in case $1$.
Let us assume that $n_u\leq\lfloor n/2^{i+1}\rfloor$; if not, we fully explore the connected component of
$M_i$ containing $C_v$ and update $n_v$ which will be at most $\lfloor n/2^{i+1}\rfloor$ by our invariant; the
description below then applies if we swap the roles of $u$ and $v$.

We can conclude that no replacement path for $(u,v)$ exists in $G_i$.
All edges explored by $P_u$ have their level increased to $i+1$ and we update $\mathcal C$ accordingly by
merging all level $(i+1)$ nodes explored by $P_u$ into one, $w$, and setting $n(w)$ to the sum of $n(w')$ for
nodes $w'$ explored by $P_u$. These level increases pay for the two searches.
Since $C_u$ and the component of $M_i$ containing $C_v$ are no longer connected in
$G_i$, we further update $\mathcal C$ as follows: let $p$ be the parent of $w$ in $\mathcal C$. We remove $w$ as a
child of $p$, decrease $n(p)$ by $n(w)$, add $w$ as a child of a new level $i$ node $p'$, set $n(p') = n(w)$, and
add $p'$ as a child of the parent of $p$. This correctly updates $\mathcal C$ and the invariant is maintained.

If $i > 0$, it may still be possible to reconnect $u$ and $v$ in $G_j$ for some $j < i$.
We thus execute the above algorithm recursively with
$i\leftarrow i - 1$, $C_u\leftarrow C(p')$, and $C_v\leftarrow C(p)$. Should we end up in case $2$ with
$i = 0$, no replacement path in $G$ between $u$ and $v$ could be found. Then $p'$ becomes a new root of
$\mathcal C$ and we terminate.

\subsection{Local trees}\label{subsec:LocalTrees}
We now extend our data structure to allow the search procedures to explore edges and vertices of multigraph $M_i$ in a
more efficient way.

First we shall convert $\mathcal C$ into a forest of binary trees by adding \emph{local trees} as in~\cite{Thorup00}.
Let $u$ be a non-leaf node of $\mathcal C$.
We form a binary tree $L(u)$ with $u$ as root and with the children of $u$ as leaves as follows. First assign a
\emph{rank} $\rank(v)\leftarrow \lfloor\log n(v)\rfloor$ to each child $v$ of $u$. Initially, each such $v$ is
regarded as a tree consisting just of $v$. While there are trees $T$ and $T'$ whose roots $r$ and $r'$ have the
same rank, we pair them by attaching $r$ and $r'$ to a new root $r''$ with $\rank(r'')\leftarrow\rank(r) + 1$.
We end up with at most $\log n$ trees $T_1,\ldots,T_k$, called \emph{rank trees}, whose roots $r_1,\ldots,r_k$ have
pairwise distinct ranks: $\rank(r_1) > \rank(r_2) > \cdots > \rank(r_k)$. We connect the rank trees into a single
\emph{local tree} $L(u)$ rooted at $u$ by adding a \emph{rank path} $v_1v_2\ldots v_{k-1}$ down from $v_1 = u$ and
connecting $r_i$ as a child to $v_i$ for $i = 1,\ldots,k-1$ and $r_k$ as a child to $v_{k-1}$.
We define $\rank(u)\leftarrow\lfloor\log n(u)\rfloor$.

The edges in $\mathcal C$ from $u$ to its children are replaced by local tree $L(u)$; let $\mathcal C_L$ be $\mathcal C$ after
all local trees have been added. As shown by
Thorup, for a child $v$ of a node $u$ in $\mathcal C$, the depth of $v$ in $L(u)$ is at most $\log(n(u)/n(v)) + 1$.
Since any leaf of $\mathcal C$ has depth at most $\ell_{\max} = \lfloor\log n\rfloor$, a telescoping sums argument
implies that any leaf of $\mathcal C_L$ has depth $O(\log n)$.

Refer to nodes of $\mathcal C_L$ that are also nodes of $\mathcal C$ as \emph{$\mathcal C$-nodes}. Our data structure
will maintain $\mathcal C_L$ as well as $n(u)$ for each $\mathcal C$-node $u$ and $\rank(v)$ for each node
$v\in\mathcal C_L$.

\subsection{Searching for edges}\label{subsec:SearchingEdges}
We shall use $\mathcal C_L$ to search for edges in $M_i$. To facilitate this, we associate a bitmap $\edge(u)$ with each
node $u$ of $\mathcal C_L$ where $\edge(u)[i] = 1$ iff a level $i$-edge is incident to
a leaf of the subtree of $\mathcal C_L$ rooted at $u$.\footnote{Thorup's data structure needs two bitmaps in order to
distinguish between tree edges and non-tree edges whereas we only need one; $\edge(u)$ can be regarded as the bitwise
'or' of his two bitmaps.}

We can use these bitmaps to search for the edges of $M_i$. Consider one of the search procedures, say $P_u$, described
above. At any point in the search, a set of vertices of $M_i$ have been explored and these correspond to level
$(i+1)$ nodes in $\mathcal C_L$ that we mark as explored. With the bitmaps, we identify unexplored descendant leaves of
marked nodes in $\mathcal C_L$
that are incident to level $i$-edges and hence to edges of $M_i$ that should be explored by $P_u$. At each leaf, we have
all incident edges grouped according to their level. A BBST is kept which allows us to get down to
a particular group in $O(\log\ell_{\max}) = O(\log\log n)$ time. When a level $i$-edge $(a,b)$ is explored in the
direction from $a$ to $b$, we determine the endpoint in $M_i$ corresponding to $b$ by moving up from leaf $b$ to the
ancestor level $(i+1)$ node in $\mathcal C_L$. Finally, we mark this level $(i+1)$ node as explored.
Since $\mathcal C_L$ has $O(\log n)$ height, we can execute $P_u$ in $O(\log n)$ time per edge explored.

\subsection{Maintaining $\mathcal C_L$}\label{subsec:MaintainCL}
We now describe how to maintain $\mathcal C_L$ as $\mathcal C$ is updated. Let us consider the update in $\mathcal C$
of merging nodes $u$ and $v$ into $u$. In $\mathcal C_L$, this is done by first removing the rank paths in
$L(u)$ and $L(v)$, leaving at most $\log n$ rank trees of distinct
ranks for each of the nodes $u$ and $v$. We may assume that rank trees are kept in two lists sorted by the ranks of their
roots and we merge the two lists into one and start pairing up trees whose roots have the same rank, in the same way
as above. We connect their roots with a new rank path, thereby creating the new $L(u)$ and we identify its root with
$u$. Total time for a merge is $O(\log n)$.

We also need to update $\mathcal C_L$ when a child $b$ in $\mathcal C$ is added to or removed from a node $a$ (we need
this when failing to find a replacement path at some level). If $b$ is to be added, we can regard it as a trivial rank
tree that should be added to $L(a)$. This can be done in $O(\log n)$ time using the same approach as for merging. If $b$
is to be removed, we first remove the rank path of $L(a)$ and identify the rank tree $T_b$ containing $b$. We delete the
path from $b$ to the root of $T_b$, thereby partitioning this rank tree into $O(\log n)$ smaller rank trees, sorted by
ranks. We pair up rank trees as described above and add a new rank path to form the updated $L(a)$.
All of this can be done in $O(\log n)$ time.

\subsection{Maintaining bitmaps}\label{subsec:MaintainBitmaps}
Finally, we need to update integers $n(u)$ for $\mathcal C$-nodes $u$ as well as the $\edge$-bitmaps. The former is
done exactly as in Sections~\ref{subsec:Insertion} and~\ref{subsec:Deletion} so let us focus on the bitmaps.
If a level $i$-edge $e$ is removed, we do the following for each of its endpoints $a$. In the leaf $a$ of $\mathcal C_L$,
we check in $O(\log\log n)$ time if $e$ was the only level $i$-edge incident to $a$. If so, we set
$\edge(a)[i]\leftarrow 0$ and we move up $\mathcal C_L$, updating the bitmap of the current
node as the bitwise 'or' of its children. Since $\mathcal C_L$ has $O(\log n)$ height, only $O(\log n)$
bitmaps need to be updated. Similarly, if $e$ is added, we set $\edge(a)[i]\leftarrow 1$ and
update bitmaps for ancestors in the same way. For nodes of $\mathcal C_L$ whose children change, we also update
their bitmaps bottom up by taking the bitwise 'or' of their children. Only $O(\log n)$ nodes are affected in
$\mathcal C_L$ after an update in $\mathcal C$ so total time is $O(\log n)$.

This completes the description of the first version of our
data structure. Correctness follows since the data structure is a simple
variation of that of Thorup where spanning trees of clusters are not kept; rather, our search procedures
certify that a spanning tree exists for an explored component and this suffices to maintain the cluster forest.
From the analysis above, our data structure handles updates in $O(\log^2n)$ amortized
time and queries in $O(\log n)$ time. In the next section, we speed up both bounds by a factor of $\log\log n$.

\section{An Improved Data Structure}\label{sec:ImprovedDataStruct}
In this section, we give our improved data structure. Before going into details, let us highlight
the main differences between this structure and that of the previous section. One ingredient is to add shortcuts to $\mathcal C_L$. Each shortcut skips $O(\log\log n)$ nodes and this will allow our search procedures to walk up trees of $\mathcal C_L$ in
$O(\log n/\log\log n)$ time per traversal when identifying visited nodes of a multigraph $M_i$.
Adding these shortcuts essentially corresponds to turning the forest $\mathcal C_L$ of binary
trees into one consisting of trees with a branching factor of order $\log n$, and reducing the
height of the trees to order $\log n/\log\log n$.
Furthermore, we will modify $\mathcal C_L$ by using \emph{lazy} local trees similar
to those of Thorup~\cite{Thorup00} instead of the local trees presented in the previous section.
This is done to maintain $\mathcal C_L$ more efficiently during changes. Unfortunately,
Thorup's lazy local trees increase the height of $\mathcal C_L$ to $O(\log n\log\log n)$ so our shortcuts
will not give any speed-up over the data structure in the previous section. Instead, we shall use a slightly more
complicated type of lazy local tree which has the properties we need while keeping the height of $\mathcal C_L$ bounded
by $O(\log n)$. The idea is to partition the children of each node of $\mathcal C$ into so called
heavy children and light children and construct the lazy local tree only for the light
children and the local tree of the previous section for the heavy children. Balancing this
in the right way will ensure a logarithmic depth of trees while still getting
the speed-up from lazy local trees. Finally, we will need another system of shortcuts for quickly identifying edges to be
explored by the search procedures; Thorup uses a similar system but it does not fit into our framework as our lazy local
trees are different from his. As shown in Lemma~\ref{Lem:Iterator} in Section~\ref{subsec:Iterator}, with these shortcuts, the search procedures can visit edges in only
$O(\log n/\log\log n)$ time per edge plus an additive $O(\log n)$ if we are in case $1$ in
Section~\ref{subsec:Deletion}; note that the latter can be paid for by the \texttt{delete} operation since a replacement path for the deleted edge has been found. In
Section~\ref{subsec:IndShortcuts}, we define these shortcuts and in
Section~\ref{subsec:Iterator}, we give an algorithm that uses these shortcuts to explore level $i$-edges;
we refer to it as a \emph{level $i$-iterator} or just \emph{iterator}.

\subsection{Lazy local trees}\label{subsec:Lazy}
Thorup~\cite{Thorup00} introduced lazy local trees and showed how they can be maintained more
efficiently than the local trees in Section~\ref{subsec:LocalTrees}. Let $u$ be a non-leaf
node of $\mathcal C$ and let $L$ be the set of children of $u$. To form the lazy local tree of
$u$, $L$ is divided into groups each of
size at most $2(\log n)^\alpha$, where $\alpha$ is a constant that we may pick as large as we
like. The nodes in each group are kept in a BBST ordered by $n(v)$-values. One of these trees is
the \emph{buffer tree} while the others are the \emph{bottom trees}. The root of a bottom tree
has rank equal to the maximum rank of its leaves. These bottom trees
are paired up to form at most $\log n$ rank trees, as described in
Section~\ref{subsec:LocalTrees}. The roots of the rank trees together with the root of the
buffer tree are leaves of a BBST called the \emph{top tree} where leaves
are ordered according to rank (and the root of the buffer tree is regarded as, say, the smallest
element). Together, these trees form the lazy local tree of $u$ which is rooted at the root of
the top tree. Note that bottom, buffer, and top trees have
polylogarithmic size only. It is ensured by the data structure of Thorup that for each bottom
tree $B$, new leaves are never added to $B$ and ranks of leaves in $B$ are not changed. This will also hold for our data structure.

We shall use these lazy local trees to improve update time to $O(\log^2n/\log\log n)$.
However, it is easy to see that due to the BBSTs in lazy local trees, if we form $\mathcal C_L$
with these trees, the depth of $\mathcal C_L$ becomes
$O(\log n\log\log n)$. If we use the same approach as in the previous
section, we thus increase query and update time by a factor of $\log\log n$. Adding shortcuts
to $\mathcal C_L$ will avoid this slowdown but this gives a data structure with the same bounds
as in the previous section.

To handle this, we introduce a new type of lazy local trees. Let $u$ be a non-leaf node of $\mathcal C$. A child $v$ of $u$ in $\mathcal C$
is said to be \emph{heavy} if $n(v)\geq n(u)/\log^\epsilon n$ and otherwise it is \emph{light}; here $\epsilon > 0$ is
a constant that we may pick as small as we like.

Our lazy local tree $L(u)$ of $u$ is illustrated in Figure~\ref{fig:LazyLocalTree}.
\begin{figure}
\centerline{\scalebox{1.0}{\input{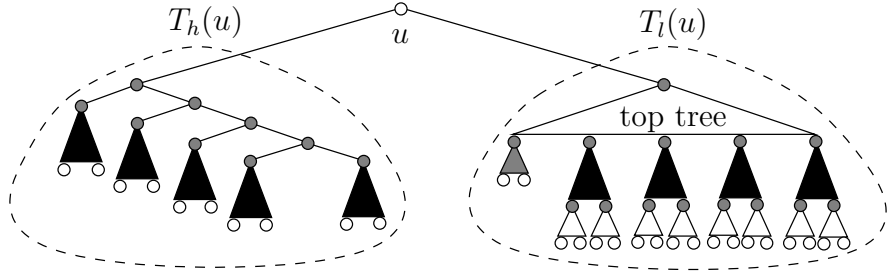}}}
\caption{Lazy local tree $L(u)$. Rank trees are black, top and bottom trees are white, the buffer tree is grey, nodes of $L(u)\cap\mathcal C$
are white, and nodes of $L(u)\setminus\mathcal C$ are grey.}
\label{fig:LazyLocalTree}
\end{figure}
It is rooted at $u$ and has
two children. One child is the root of a tree $T_h(u)$ having the
heavy children of $u$ as leaves and the other child is the root of a tree $T_l(u)$ having the light children of $u$ as
leaves (to simplify the description, we assume that $u$ has both light and heavy children; if there were no light
(resp.~heavy) children, we would simply identify $L(u)$ with $T_h(u)$ (resp.~$T_l(u)$)).
We call $T_h(u)$ the \emph{heavy tree} (of $u$) and it is defined as the local tree from the previous section over
the heavy children of $u$; note that its size is asymptotically bounded by the number of heavy children of $u$
which is at most $\log^\epsilon n$.
The tree $T_l(u)$ is called the \emph{light tree} (of $u$) and it is Thorup's lazy local tree for the set of light children of $u$.

The following lemma shows that when $\mathcal C_L$ is formed from
$\mathcal C$ by inserting these lazy local trees, the height of trees in $\mathcal C_L$ is only a constant factor larger than
that in Section~\ref{sec:SimpleDataStruct}.
\begin{Lem}\label{Lem:HeightCL}
The height of $\mathcal C_L$ (with lazy local trees) is $O(\log n)$.
\end{Lem}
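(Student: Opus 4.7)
The plan is to bound the length of an arbitrary root-to-leaf path $\pi$ of $\mathcal C_L$ by tracking how much depth is contributed by each lazy local tree that $\pi$ traverses. Let $u_0, u_1, \ldots, u_k$ be the $\mathcal C$-nodes visited by $\pi$, where $u_0$ is a root of $\mathcal C$ and $u_k$ is a leaf of $\mathcal C$ (i.e., a vertex of $G$); then the length of $\pi$ is exactly $\sum_{i=1}^{k}d_i$, where $d_i$ is the depth of $u_i$ in the lazy local tree $L(u_{i-1})$.

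Next I would bound $d_i$ case by case according to whether $u_i$ is a heavy or a light child of $u_{i-1}$. If $u_i$ is heavy, it is a leaf of $T_h(u_{i-1})$, which is the local tree of Section~\ref{subsec:LocalTrees} applied to the heavy children of $u_{i-1}$; that section's depth bound immediately yields $d_i \leq \log(n(u_{i-1})/n(u_i)) + O(1)$. If $u_i$ is light, it is a leaf of Thorup's lazy local tree $T_l(u_{i-1})$, and I would bound its depth as the sum of three pieces: the top tree is a BBST on at most $\log n + 1$ leaves and thus has depth $O(\log\log n)$; the walk down the rank tree from its top-tree leaf to the root $r_B$ of the bottom tree $B$ containing $u_i$ has length at most $\lfloor\log n(u_{i-1})\rfloor - \mathrm{rank}(r_B) + O(1) \leq \log(n(u_{i-1})/n(u_i)) + O(1)$ since $\mathrm{rank}(r_B) \geq \mathrm{rank}(u_i)$; and $B$ (or the buffer tree, in the exceptional case) has size at most $2\log^\alpha n$, hence depth $O(\log\log n)$. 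This totals $\log(n(u_{i-1})/n(u_i)) + O(\log\log n)$. The crucial step is to absorb this additive slack: because $u_i$ is light, $n(u_i) < n(u_{i-1})/\log^\epsilon n$, so $\log(n(u_{i-1})/n(u_i)) > \epsilon\log\log n$, and therefore $d_i = O(\log(n(u_{i-1})/n(u_i)))$.

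Combining the two cases, $d_i = O\bigl(1 + \log(n(u_{i-1})/n(u_i))\bigr)$ always, and the total length of $\pi$ telescopes:
\[
\sum_{i=1}^{k} d_i \;=\; O(k) + O\!\left(\log\frac{n(u_0)}{n(u_k)}\right).
\]
Since $u_0,\ldots,u_k$ is a root-to-leaf path in $\mathcal C$, we have $k \leq \ell_{\max}+1 = O(\log n)$, and $n(u_0)\leq n$ while $n(u_k)=1$, so both terms are $O(\log n)$. The main obstacle is the light-child analysis: one must set up the depth bound on Thorup's lazy local tree carefully and then use the heavy/light threshold $n(u)/\log^\epsilon n$ to charge the additive $O(\log\log n)$ against the multiplicative $\log(n(u)/n(v))$ factor. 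Without this charging, one only recovers the $O(\log n\log\log n)$ bound that motivated splitting the children into heavy and light in the first place.
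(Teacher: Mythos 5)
Your argument is correct and follows essentially the same route as the paper: split $\mathcal C$-node depths in each lazy local tree into the heavy case (the Section~\ref{subsec:LocalTrees} bound) and the light case (top tree, rank tree, and bottom/buffer tree each contribute, yielding $\log(n(u_{i-1})/n(u_i)) + O(\log\log n)$), absorb the additive $O(\log\log n)$ using the lightness threshold $n(u_i) < n(u_{i-1})/\log^\epsilon n$, and telescope over a root-to-leaf path. The paper states the light-tree depth bound more compactly via the maximum-$n$ leaf $w$ of the bottom tree (using $n(w)\geq n(v)$) rather than your explicit three-piece decomposition, but this is only a presentational difference and the charging step is identical.
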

\begin{proof}
Let $u$ be a non-leaf node in $\mathcal C$ and let $v$ be one of its children, also in $\mathcal C$. If
$v$ is a heavy child of $u$ then $v\in T_h(u)$ so its depth in
$L(u)$ is at most $\log(n(u)/n(v)) + 1$. Now assume that $v$ is a light child of $u$. Then $v$ belongs to either
a bottom tree or the buffer tree of $L(u)$. In the latter case, the depth of $v$ in $L(u)$ is $O(\log\log n)$.
In the former case, let $B$ be the bottom tree containing $v$ and let $w$ be a leaf of $B$ maximizing $n(w)$.
By definition, the root of $B$ has rank $\rank(w)$. Hence the depth of $v$ in $L(u)$ is at most
$\log(n(u)/n(w)) + O(\log\log n)\leq\log(n(u)/n(v)) + O(\log\log n)$. This is $O(\log(n(u)/n(v)))$ as
$n(v) < n(u)/\log^\epsilon n$ implies $\log\log n = O(\log(n(u)/n(v)))$.

It follows that in both cases, $v$ has depth $O(1 + \log(n(u)/n(v)))$ in $L(u)$. The height of $\mathcal C$ is at most
$\log n$ so by a telescoping sums argument, $\mathcal C_L$ has height $O(\log n)$, as desired.
\end{proof}

\subsection{Maintaining lazy local trees}\label{subsec:MaintainLazy}
Now, let us describe how to maintain lazy local trees of $\mathcal C_L$ corresponding to changes in $\mathcal C$.
For technical reasons, we assign ranks to rank path nodes in heavy trees by $\rank(v_i) = \rank(r_i)$ for $i = 1,\ldots,k-1$, where $v_i$ and $r_i$ are defined as in Section~\ref{subsec:LocalTrees}.
In the following, \emph{rank nodes} are nodes that are assigned a rank. Note that every node of $\mathcal C_L$ is
a rank node except non-leaf nodes of a buffer or top tree and nodes of a bottom tree $B$ that are neither leaves
nor the root of $B$.

\subsubsection{Merging}\label{subsubsec:Merging}
We need to maintain lazy local trees when merging $\mathcal C$-nodes and when
adding and removing children from $\mathcal C$-nodes. We start with merging.
Consider two $\mathcal C$-nodes $u$ and $v$ that are to be merged into $u$. Denote by $u'$ the
updated $u$ after the merge. Note that every heavy child of $u'$ must be a heavy child of either $u$ or $v$.
Hence, we can form $T_h(u')$ by traversing every leaf $w$ of $T_h(u)$ and $T_h(v)$ and adding it as a leaf of $T_h(u')$
iff $n(w)\geq n(u')/\log^\epsilon n$. Total time for this is $O(|T_h(u)| + |T_h(v)|) = O(\log^\epsilon n)$.

Forming $T_l(u')$ is done as in~\cite{Thorup00}. The two buffer trees
are merged in time bounded by the smaller of the two trees. If the number of leaves of the merged buffer tree exceeds
$\log^\alpha n$,
it is turned into a bottom tree for $u'$, leaving an empty buffer tree. The root of the new bottom tree is paired up
with other rank nodes, if needed. We can pay for all buffer tree merges by giving a node $(\log\log n)^2$ credits
when it switches from not belonging to belonging to a buffer tree: every time it is moved to another buffer tree, we spend
$O(\log\log n)$ time for the node and the size of the buffer tree containing the node grows by a factor of at least $2$;
hence the node is moved at most $O(\log\log n)$ times before either being deleted or being moved to a bottom tree.

We propose a different approach for merging top trees than that of Thorup; we feel ours is simpler as it avoids
keeping a special bitmap associated with each top tree node.
Assume w.l.o.g.~that the top tree $T(u)$ in $T_l(u)$ is no bigger than the top tree $T(v)$ in $T_l(v)$.
For each leaf of $T(u)$ we binary search for a leaf with the same rank in $T(v)$ in $O(\log\log n)$ time. While there are
roots with equal ranks, we pair them up as before, finally obtaining the top tree for $T_l(w)$. The time for this is
$O(|T(u)|\log\log n)$ plus time bounded by the number of new rank nodes created. Below we will show how the creation of
rank nodes are paid for when they are deleted. With an amortized analysis similar to that above for buffer trees,
we can pay for all top tree
updates if we assign $(\log\log n)^2$ credits to a node when it switches from not appearing to appearing as a leaf in a top
tree, and if we borrow $(\log\log n)^2$ credits whenever we delete a leaf of a top tree (thereby borrowing from a new rank node)
and distribute these borrowed credits evenly among the remaining leaves.

Since $n(u') = n(u) + n(v)$, we may have some leaves $w\in T_h(u)\cup T_h(v)$ with $n(w) < n(u')/\log^\epsilon n$
and hence $w$ should belong to $T_l(u')$. All such nodes are added to the buffer tree;
as before, this tree is turned into a bottom tree if it gets more than $\log^\alpha n$ leaves. Total time for this is
$O(\log^\epsilon n\log\log n)$ (plus time bounded by the number of new rank nodes created)
since at most $2\log^\epsilon n$ nodes need to be moved from $T_h(u)\cup T_h(v)$.

Finally, let $p$ be the parent of $u$ and $v$ in $\mathcal C$; below we will add shortcuts that
allow us to identify $p$ from $u$ (equivalently from $v$) in $O(\log n/\log\log n)$ time (Lemma~\ref{Lem:Shortcut}).
We remove $u$ and $v$ as leaves of $L(p)$ and then we add $u'$ as a leaf of $T_h(p)$ if $n(u')\geq n(p)/\log^\epsilon n$ and
otherwise we add $u'$ as leaf of the buffer tree in $T_l(p)$. This takes $O(\log\log n)$ time
since both the buffer tree and $T_h(p)$ have poly-logarithmic size and their roots have depth
$O(\log\log n)$ in $L(p)$.

\subsection{Removing a child}\label{subsec:RemovingChild}
Consider removing a $\mathcal C$-node
child $v$ of a $\mathcal C$-node $u$ and adding it as a child of a new $\mathcal C$-node $w$ which is added as a child of the
parent $\mathcal C$-node $p$ of $u$. We first focus on removing $v$ and we let $u'$ denote $u$ after this update.

Assume first that $v\in T_h(u)$. After removing $v$ from $T_h(u)$, we have $n(u') = n(u) - n(v)$ and hence
some nodes may need to be moved from $T_l(u)$
to $T_h(u)$ in order to form $L(u')$. Identifying such nodes in the buffer tree of $T_l(u)$ can be done in $O(\log\log n)$
time per node. Now suppose $w$ is a leaf of a bottom tree $B$ that needs to be moved to $T_h(u)$. Let $b$ be the root of $B$ and let
$a$ be the leaf of the top tree of $T_l(u)$ having $b$ as descendant. Since
$\rank(b)\geq \rank(w)\geq\lfloor\log(n(u')/\log^\epsilon n)\rfloor > \lfloor\rank(a) - \epsilon\log\log n\rfloor$
and since ranks are strictly decreasing on the path from
$a$ to $b$, $b$ has depth at most $\lceil\epsilon\log\log n\rceil$ in the subtree of $T_l(u)$ rooted at $a$. A DFS
from $a$ identifies all bottom tree roots with at most this depth in $O(\log^\epsilon n)$ time; let $L_a$ be the set of
leaves in these bottom trees that need to be moved to $T_h(u)$. Using binary search in the $O(\log^\epsilon n)$ bottom
trees, we identify $L_a$ and move it to $T_h(u)$ in time $O((\log^\epsilon n + |L_a|)\log\log n)$.
Since all rank leaves of the top tree of $T_l(u)$ have distinct ranks, there are only $O(\log\log n)$ choices for $a$.
Also, the maximum number of leaves to be moved is bounded by the number of leaves of $T_h(u')$ which is at most
$\log^\epsilon n$. Total time is thus $O(\log^\epsilon n(\log\log n)^2)$.

We may also need to move $u$ in $L(p)$.
If $u$ belongs to a bottom tree in $T_l(p)$, we move it to the buffer tree as we do not allow ranks
of leaves in bottom trees to change. We also move $u$ to the buffer tree if $u\in T_h(p)$ and $n(u') < n(p)/\log^\epsilon n$.
As we saw for merge, the time for this is $O(\log n/\log\log n)$.

As in~\cite{Thorup00}, we need to do more global updates whenever removing a leaf from a bottom tree $B$ of $T_l(u)$ reduces
the maximum rank of leaves in $B$ and hence the rank of the root $b$ of $B$. We use a similar approach and amortized
analysis as Thorup here: first, delete all rank nodes from $b$ to the ancestor leaf $a$ of the top tree of $T_l(u)$. Then
pair nodes of equal rank as before. For the amortized analysis, we can assume that the graph $G$ ends with no edges so all rank nodes
end up being deleted and we can amortize creation of rank nodes in $T_l(u)$ over deletion of rank nodes in $T_l(u)$.
A rank node is only deleted
when a bottom tree root has its rank reduced. Since a rank is at most
$\log n$, a particular bottom tree can have its root rank reduced at most $\log n$ times (nodes are never added to a bottom
tree and ranks of bottom tree leaves do not change) so in total it gives rise to at most $\log^2n$ rank node deletions for that bottom tree.
But since a bottom tree starts out with $(\log n)^\alpha$ leaves that will all be
deleted eventually, we can amortize each rank node deletion over $(\log n)^{\alpha - 2}$ deletions of bottom tree leaves.
When removing child $v$, we delete at most $\log^\epsilon n$ leaves from bottom trees of $T_l(u)$ in order to form
$L(u')$ so we can amortize each rank node
deletion over $(\log n)^{\alpha - 2 - \epsilon}$ deletions of children in $\mathcal C$. Hence if we pick constant
$\alpha \geq 2 + \epsilon$, we can afford to pay for rank node deletions and also to pay for the $(\log\log n)^2$ credits
that may have been borrowed from a rank node during a merge.

It remains to consider the case $v\in T_l(u)$. Above we showed how to efficiently remove up to order $\log^\epsilon n$ leaves from
$T_l(u)$ so clearly the single leaf $v$ can also be removed efficiently. We then move additional leaves from
$T_l(u)$ to $T_h(u)$ and move $u$ to $T_h(p)$ or to the buffer tree of $T_l(p)$, as above.

\subsection{Adding a child}
Now consider adding $v$ as a child of $w$ and $w$ as a child of $p$. The former is trivial as $w$ has
no children before adding $v$. If $n(v)\geq n(p)/\log^\epsilon n$, we add $w$ to $T_h(p)$ and otherwise we add it to the buffer tree of
$T_l(p)$. Given $p$, total time for this is $O(\log\log n)$.

\subsection{Shortcutting}\label{subsec:Shortcutting}
In order to get our $\log\log n$ speed-up for updates and queries, we need to be able to traverse $\mathcal C_L$ faster.
Thorup~\cite{Thorup00} introduced a system of shortcuts for quickly identifying certain edges incident to clusters.
This will not suffice in our approach since for our search procedures, we also need to move quickly from a leaf
of $\mathcal C_L$ to its ancestor level $i$ node in order to identify the associated level $i$ cluster, for some $i$.
We therefore introduce a different system of shortcuts in the following. To avoid skipping past a level $i$ node
with these shortcuts,
our data structure associates, for each node of a heavy tree $T_h(u)$, the level $\ell(u)$ of $u$.
We can easily extend the data structure for lazy local trees to maintain these values within the same time bound since
$T_h(u)$ has only size $O(\log^\epsilon n)$.

Let us color each node of $\mathcal C_L$ either white or black. For the coloring below, we
define a \emph{black-induced child}
of a node $u\in \mathcal C_L$ to be a black descendant $v$ of $u$ such that all interior nodes on the path from $u$ to
$v$ in $\mathcal C_L$ are white. If $u$ is black, we add a shortcut between $u$ and each of its black-induced children. The shortcut is directed to $u$, allowing us
to move quickly up in $\mathcal C_L$.
The black-induced parent of a node is defined similarly. Note that the shortcuts (with directions reversed) form a forest of rooted trees
over the black nodes.

Now, let us define the coloring of nodes of $\mathcal C_L$. The following nodes are colored black:
\begin{enumerate}
\item every $\mathcal C$-node $u$ with $\ell(u) = i\lfloor\epsilon\log\log n\rfloor$ for some integer $i$,
\item every rank node $u$ having a parent rank node $v$ in $\mathcal C_L$ such that
      $\rank(u)\leq i\lfloor\epsilon\log\log n\rfloor < \rank(v)$ for some integer $i$,
\item every leaf of $\mathcal C_L$ and of every buffer, bottom, and top tree, and
\item every node of a buffer, bottom, and top tree whose depth in that tree is divisible by
      $\lfloor\epsilon\log\log n\rfloor$ (in particular, every root of such a tree is black).
\end{enumerate}
A black node is of \emph{type} $1$, $2$, $3$, and/or $4$, depending on these four cases.
All other nodes are colored white. For performance reasons, we shall only maintain $\edge$-bitmaps for black nodes. Lemma~\ref{Lem:Shortcut} below shows that
these shortcuts give a $\log\log n$ speed-up when moving up a tree of
$\mathcal C_L$. We first need the following result.
\begin{Lem}\label{Lem:RanksCL}
Ranks are non-decreasing along any simple leaf-to-root path in $\mathcal C_L$. Between any two consecutive
$\mathcal C$-nodes on such a path, there are at most two pairs of nodes of equal rank.
\end{Lem}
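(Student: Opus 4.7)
My plan is to decompose any leaf-to-root path $P$ in $\mathcal C_L$ into \emph{segments} separated by the $\mathcal C$-nodes it visits. Since $\mathcal C_L$ is obtained from $\mathcal C$ by substituting the lazy local tree $L(u)$ for the edges out of each non-leaf $\mathcal C$-node $u$, the portion of $P$ between two consecutive $\mathcal C$-nodes $v$ (child) and $u$ (parent) lies entirely inside $L(u)$, entering at the leaf $v$ of $L(u)$ and exiting at its root $u$. Verifying both claims inside every such segment suffices, because a $\mathcal C$-node $w$ carries the same rank $\lfloor\log n(w)\rfloor$ whether viewed as the root of $L(w)$ or as a leaf of the local tree of its $\mathcal C$-parent, so ranks match at segment boundaries and no equal pair is created there beyond those already counted.

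Fix a segment from $v$ to $u$ and case on where $v$ sits in $L(u)$. In the \emph{heavy case} $v$ is a leaf of $T_h(u)$ lying in some rank tree $T_j$ with root $r_j$; inside $T_j$, ranks strictly increase by $1$ per edge toward $r_j$, because pairing combines two equal-rank roots into a node of rank one greater. The step $r_j\to v_j$ (for $j<k$) contributes one equal pair by the definition $\rank(v_j)=\rank(r_j)$, while for $j=k$ the step $r_k\to v_{k-1}$ is strict since the $\rank(r_i)$'s are pairwise distinct; the rank-path portion $v_j\to\cdots\to v_1$ is strictly increasing for the same reason. The final step $v_1\to u$ is non-decreasing and contributes at most one more equal pair, because the $n$-values of the heavy children of $u$ sum to at most $n(u)$, giving $\rank(v_1)=\rank(r_1)\leq\lfloor\log n(u)\rfloor=\rank(u)$. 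In the \emph{light-bottom case} $v$ is a leaf of a bottom tree $B$ of $T_l(u)$; the only rank nodes on the segment are $v$, $\mathrm{root}(B)$, the interior nodes of the rank tree containing $B$, that rank tree's root, and $u$, since interior nodes of $B$, all interior nodes of the top tree, and the top-tree root are unranked. Inside $B$ one has $\rank(v)\leq\rank(\mathrm{root}(B))$ by definition of the root's rank; inside the rank tree ranks strictly increase as in the heavy case; and since each bottom tree in the rank tree contains a leaf of rank $r$ and hence contributes at least $2^r$ to the sum of $n$-values of the light children, the rank-tree root (of rank $r+k$ when the rank tree has $2^k$ bottom trees) has rank at most $\lfloor\log n(u)\rfloor=\rank(u)$. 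So the only possible equal pairs are $(v,\mathrm{root}(B))$ and $(\text{rank tree root},u)$. In the \emph{light-buffer case} the buffer-tree root is a non-leaf of the buffer tree and hence unranked, as are the interior nodes and the root of the top tree, so only $v$ and $u$ are rank nodes on the segment; $\rank(v)\leq\rank(u)$ gives at most one equal pair.

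The main obstacle is keeping track of which nodes in the four constituent tree types (rank, bottom, buffer, top) are rank nodes, and then invoking the ``sum of $n$-values at most $n(u)$'' inequality in two different forms---once over the heavy children inside $T_h(u)$ to get $\rank(r_1)\leq\rank(u)$, and once over the light children represented in a single rank tree of $T_l(u)$ to get the same bound on its root. With these two inequalities in place, each case reduces to reading off a monotone rank sequence directly from the construction.
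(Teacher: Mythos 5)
Your proof follows the same structure as the paper's: decompose the leaf-to-root path into segments between consecutive $\mathcal C$-nodes, identify the rank nodes on each segment, use strict rank growth inside rank trees, and bound the rank of the rank-tree root by $\rank(u)$ to obtain both monotonicity and the two-pair count. Your identification of the rank nodes and of the (at most) two candidate equal pairs per segment matches the paper's, and your explicit treatment of the buffer-tree case and of the convention $\rank(v_j)=\rank(r_j)$ are fine additions.

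One step needs care, though. In the light-bottom case you assert that the rank tree has $2^k$ bottom trees each containing a leaf of rank $r$, so that the rank-tree root has rank $r+k$. That picture is only a special case: the bottom trees paired into a single rank tree generally have \emph{different} root ranks (after the first round of pairing you may merge a singleton bottom tree of rank $r+1$ with a pair of rank-$r$ bottom trees), so the leaves of the resulting rank tree sit at different depths. The argument that actually works --- the one the paper uses, and the one your heavy-case step $\rank(r_1)\leq\lfloor\log n(u)\rfloor$ also silently relies on --- is the invariant that a rank tree whose root has rank $\rho$ and whose leaves have ranks $\rho_1,\ldots,\rho_m$ satisfies $\sum_i 2^{\rho_i}=2^{\rho}$ (immediate by induction, since pairing two rank-$(\rho-1)$ trees gives rank $\rho$). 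Picking a maximum-rank leaf $w_i$ from each bottom tree then gives $n(w_i)\geq 2^{\rho_i}$, so $2^{\rho}\leq\sum_i n(w_i)\leq n(u)$, i.e.\ $\rho\leq\lfloor\log n(u)\rfloor=\rank(u)$. With that substitution your argument is the paper's argument in slightly expanded form.
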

\begin{proof}
The first part of the
lemma will follow if we can show that ranks are non-decreasing along any simple leaf-to-$u$ path $P$ in a lazy local
tree $L(u)$. This is clearly the case for leaves in $T_h(u)$. A leaf $v$ in $T_l(u)$ either belongs to a bottom or buffer
tree $T$. Assume the former since otherwise, $u$ and $v$ are the only rank nodes on the $v$-to-$u$ path $P$
and since both are $\mathcal C$-nodes, $\rank(u)\geq \rank(v)$.

Only the first node $v$ and last node $r$ of subpath $P[v,r] = T\cap P$ are rank nodes. Since $r$ is the root of $T$,
it has maximum rank among leaves in $T$ so $\rank(v)\leq\rank(r)$. Let $l$ be the leaf of the top tree of
$L(u)$ belonging to
$P$. All nodes on $P[r,l]$ belong to a rank tree so ranks are increasing along this subpath. For the subpath
$P[l,u]$, only $l$ and $u$ are rank nodes. Let $L$ be the set of leaves of $T_l(u)$ formed by picking
one of maximum rank from each bottom tree descending from $l$. Then
$\rank(l)\leq\lfloor\log(\sum_{u'\in L}n(u'))\rfloor\leq\lfloor\log n(u)\rfloor = \rank(u)$. This shows the first part of
the lemma.

For the second part, let $u$ and $v$ be $\mathcal C$-nodes where $v$ is a child of $u$. Assume first
that $v$ is a leaf of $T_h(u)$. Ranks are strictly increasing on the path from $v$ to the root $r$ of the rank tree
containing $v$. Ranks are also strictly increasing along the rank path in $T_h(u)$. Hence, there
are at most two pairs of equal rank nodes in $\mathcal C_L$
between $u$ and $v$, namely $r$ and its parent and the root of $T_h(u)$ and
$u$. Now consider the case where $v$ is a leaf of $T_l(u)$ and again assume it belongs to a bottom tree $B$. Let $r$ be the root
of $B$ and
let $l$ be the leaf of the top tree which is an ancestor of $v$. Then again, since ranks are strictly increasing along
any leaf-to-root path in a rank tree, there can be at most two equal-rank pairs between $u$ and $v$, namely
$(v,r)$ and $(l,u)$. This completes the proof.
\end{proof}

\begin{Lem}\label{Lem:Shortcut}
Given $\mathcal C_L$ with shortcuts, given a level $i$, and given a $\mathcal C$-node of $\mathcal C_L$ with
an ancestor level $i$ node, we can identify this ancestor in $O(\log n/\log\log n)$ time.
\end{Lem}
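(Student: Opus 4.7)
The plan is a two‑phase walk from the given $\mathcal C$-node $x$ up to its level $i$ ancestor: a shortcut phase that skips white nodes, followed by a short normal walk that hits the exact target.

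I would first establish the structural claim that any maximal white subpath along a simple leaf–to–root path in $\mathcal C_L$ has length $O(\log\log n)$; together with Lemma~\ref{Lem:HeightCL}, this implies that the number of black nodes on such a path is $O(\log n/\log\log n)$. The argument is a case analysis driven by the coloring rules. If the segment sits inside a single buffer, bottom, or top tree, rule 4 plants a black node at every $\lfloor\epsilon\log\log n\rfloor$-th depth. If it sits inside a single heavy tree $T_h(u)$, Lemma~\ref{Lem:RanksCL} says ranks strictly increase along the path (apart from the single equal pair at the junction of a rank–tree root with its rank–path parent), so rule 2 fires every $\lfloor\epsilon\log\log n\rfloor$ rank steps. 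A segment that crosses two local trees must traverse a $\mathcal C$-node; if that $\mathcal C$-node is a light child it is a leaf of a buffer or bottom tree and is black by rule 3, while if it is a heavy child at a level divisible by $\lfloor\epsilon\log\log n\rfloor$ it is black by rule 1. Hence a white segment that crosses multiple local trees can only chain heavy children at non–multiple levels; each such $T_h$ is of height $O(\log\log n)$ and the concatenated rank is strictly monotone apart from the two–pair slack of Lemma~\ref{Lem:RanksCL}, so rule 2 still fires within $O(\log\log n)$ steps. Combining the cases gives the $O(\log\log n)$ bound on white segments.

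Next, the walk itself. Starting at $x$, follow parent pointers to the nearest black ancestor $z_0$; by the structural claim this costs $O(\log\log n)$. Then iterate $z_{k+1}\leftarrow$ black-induced parent of $z_k$, each step being a single shortcut pointer, hence $O(1)$ time. The number of iterations is the number of black nodes along the root path minus one, which by the paragraph above is $O(\log n/\log\log n)$.

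The remaining subtlety is that the target — a $\mathcal C$-node at level $i$ — need not itself be black, so a naive shortcut walk can overshoot. I would maintain, for every node of $\mathcal C_L$, the level $\ell(u)$ of the $\mathcal C$-node owning it: for heavy–tree nodes this is the annotation already introduced in Section~\ref{subsec:Shortcutting}, for top/bottom/buffer tree nodes the same annotation can be stored and maintained within the bounds already analyzed in Section~\ref{subsec:MaintainLazy}, and for $\mathcal C$-nodes the level is stored directly. Before each shortcut jump, compare $\ell(z_{k+1})$ to $i$; if $\ell(z_{k+1})<i$, the target lies strictly between $z_k$ and $z_{k+1}$, so abandon shortcuts and walk normally upward from $z_k$ until reaching the level $i$ $\mathcal C$-node. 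By the structural claim this corrective walk is $O(\log\log n)$, so the total running time is $O(\log n/\log\log n)$.

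The hard part will be the structural claim, and in particular the subcase of a white segment that threads through a sequence of heavy children at non-multiple levels; the two–equal–pair slack per local tree allowed by Lemma~\ref{Lem:RanksCL} is precisely what has to be kept from accumulating past $\lfloor\epsilon\log\log n\rfloor$, and this is where I would spend the most care in the formal write-up.
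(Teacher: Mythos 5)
There is a genuine gap at the heart of your counting argument. You establish that every maximal white subpath has length $O(\log\log n)$ and then claim that, combined with Lemma~\ref{Lem:HeightCL}, this implies the number of black nodes on the path is $O(\log n/\log\log n)$. That inference is backwards: from ``white segments are short'' and ``path length is $O(\log n)$'' you can deduce only a \emph{lower} bound on the number of black nodes (a path of $O(\log n)$ nodes all of which are black has white segments of length $0$, trivially $O(\log\log n)$, yet has $\Theta(\log n)$ black nodes). What Lemma~\ref{Lem:Shortcut} actually needs is an \emph{upper} bound on the number of black nodes, since the shortcut chain visits every black ancestor, and your white-segment bound simply does not speak to that. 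The paper instead counts black nodes directly by type: type~$1$ gives $O(\log n/\log\log n)$ because $\mathcal C$ has $O(\log n)$ levels and only every $\lfloor\epsilon\log\log n\rfloor$-th is type~$1$; type~$2$ gives the same by the monotonicity of ranks in Lemma~\ref{Lem:RanksCL}; type~$3$ is the subtle one --- each time the path enters a light tree $T_l(w)$ at a leaf $b$, the heaviness threshold forces $\rank(w) > \rank(b) + \lfloor\epsilon\log\log n\rfloor$, so at most $O(\log n/\log\log n)$ light trees are traversed and hence at most $O(\log n/\log\log n)$ type~$3$ nodes appear; type~$4$ is then bounded by the type~$3$ count together with Lemma~\ref{Lem:HeightCL}. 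This rank-jump-per-light-tree argument is the load-bearing step, and it is missing from your plan.

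Related to this, you point to the multi-heavy-tree white segment as ``the hard part,'' but that part of your structural claim is in fact the easy part (ranks go up by at least one per rank-node step and only $\lfloor\epsilon\log\log n\rfloor$ intermediate $\mathcal C$-levels separate type~$1$ nodes, so Lemma~\ref{Lem:RanksCL}'s two-pair slack per local tree cannot accumulate). The genuinely delicate issue --- why the path cannot thread through too many light trees, creating too many type~$3$/$4$ black nodes --- is exactly what your proposal leaves unaddressed. Your two short normal walks (from $x$ to $z_0$ and the corrective walk near the target) and the overshoot test via stored $\ell$-values match the paper's approach, so the framework is right; it is the black-node count in the middle phase that needs to be replaced by a direct type-by-type argument.
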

\begin{proof}
Let $v$ be the given node. To identify the ancestor level $i$ node $u$ of $v$, we start by
traversing the $v$-to-root path of the tree in $\mathcal C_L$ containing $v$ and we stop if we reach $u$ or a black
node. Since $v$ is a $\mathcal C$-node and since light trees have black leaves, all nodes visited are rank nodes.
Lemma~\ref{Lem:RanksCL} then implies that we visit at most $O(\log\log n)$ nodes before stopping. Hence, the
traversal takes $O(\log\log n)$ time. Assume that we reach a black node $b_1$ as we are done if we reach $u$.

From $b_1$ we traverse shortcuts until we get to the lowest-depth black node $b_2$ having $u$ as ancestor.
Finally we traverse the $b_2$-to-root path in $\mathcal C_L$ until we reach $u$. The latter takes $O(\log\log n)$ time
by an argument similar to the above.

Next we show that there are $O(\log n/\log\log n)$ shortcuts between $b_1$ and $b_2$. Since $\mathcal C$ has logarithmic height, there are only
$O(\log n/\log\log n)$ shortcut endpoints of type $1$.
Lemma~\ref{Lem:RanksCL} implies the same bound for shortcut endpoints of
type $2$.
If a shortcut ends at a type $3$ node $b$ which is a leaf of a buffer or bottom tree, it means that we enter a light
tree $T_l(w)$. We encounter only one additional type $3$ node in $T_l(w)$, namely a leaf of a top tree. Since
$n(w) > n(b)\log^\epsilon n$ we have $\rank(w) > \rank(b) + \lfloor\epsilon\log\log n\rfloor$ and
since a rank is at most $\log n$, Lemma~\ref{Lem:RanksCL} implies that
we encounter no more than $O(\log n/\log\log n)$ type
$3$ nodes between $b_1$ and $b_2$. Finally, this bound on the number of type $3$ nodes and Lemma~\ref{Lem:HeightCL} give
the same asymptotic bound on the number of type $4$ nodes.

What remains is to describe how to avoid jumping past $u$ when traversing the shortcuts. Let $(b_2,b_3)$ be the shortcut
that jumps past $u$, if any. Since leaves of light
trees are black, $b_3$ must belong to some heavy tree $T_h(w_{b_3})$. If $b_2$ belongs to a light tree, it
must belong to the root of the top tree in $T_l(u)$ since that root is black. We can avoid this case as follows: whenever
we reach the root of a top tree, its parent is a $\mathcal C$-node and we compare its level with $i$ to determine
whether we should continue with the shortcuts.

Now, consider the case where $b_2$ belongs to a heavy tree $T_h(w_{b_2})$. Recalling that for every node of a heavy tree
$T_h(w)$ we keep the level $\ell(w)$ of $w$, we can check that $\ell(w_{b_2})\leq i < \ell(w_{b_3})$ to detect that $a$
is the last node that we should visit with shortcuts. This completes the proof.
\end{proof}

\subsection{Induced shortcuts}\label{subsec:IndShortcuts}
Lemma~\ref{Lem:Shortcut} allows us to speed up part of our search procedure, namely identifying the endpoints
(level $(i+1)$ clusters) of an edge in a multigraph $M_i$ from the endpoints of the corresponding edge in $G$; we can
do this in $O(\log n/\log\log n)$ time per endpoint. We also need a faster
iterator for level $i$-edges incident to explored level $(i+1)$ clusters. We focus on this in the following.

Define an \emph{$i$-induced forest $\mathcal F_i$} as in~\cite{Thorup00}: its $i$-induced leaves are the leaves
of $\mathcal C_L$ with an incident level $i$-edge. Its $i$-induced roots are the level $(i+1)$ nodes of $\mathcal C_L$
having descendant $i$-induced leaves. Its $i$-induced branch nodes are the nodes of $\mathcal C_L$ with both
children have descending $i$-induced leaves. The $i$-induced parent of an $i$-induced node is its nearest $i$-induced
ancestor. This defines $\mathcal F_i$.

A straightforward level $i$-iterator performs a DFS in a tree of $\mathcal F_i$. However, maintaining the
edges of $\mathcal F_i$ will be too expensive. Instead, we introduce a new system of shortcuts in $\mathcal C_L$ that
will allow
the DFS to move between any two incident $i$-induced nodes of $\mathcal F_i$ in $O(\log n/\log\log n)$ time. Since
a tree of $\mathcal F_i$ is binary, the number of branch nodes of $T$ is bounded by the number of leaves of $T$, so
this will give a level $i$-iterator with $O(\log n/\log\log n)$ amortized time per level $i$-edge. In the following,
we define the new shortcuts. Refer to the following types of nodes of $\mathcal C_L$ as \emph{special}:
\begin{enumerate}
\item every $\mathcal C$-node $u$ with
      $\ell(u) = i\lfloor\log\log n\rfloor\lfloor\epsilon\log\log n\rfloor$ for some integer $i$,
\item every leaf of $\mathcal C_L$, and
\item every rank node $u$ of a light tree with $\rank(u) = i\lfloor\log\log n\rfloor\lfloor\epsilon\log\log n\rfloor$ for some integer $i$.
\end{enumerate}
Note that every special node is black. Also note that we defined type $3$ special nodes using
equality rather than inequality as for type $2$ black nodes. This suffices since ranks increase by
$1$ as we move up rank nodes of a light tree; this is not the case in heavy trees where ranks can
increase by larger values along a rank path. For a special node $u$, define a \emph{special child} of $u$ to be
a descendant special node $v$ such that all nodes between $u$ and $v$ are not special. Special parents are defined similarly.
For any level $i$, if there is a unique special child $v$ of $u$ for which $\edge(v)[i] = 1$, we add a
shortcut (bidirected pointer) between $u$ and $v$. To distinguish these shortcuts from those of
Section~\ref{subsec:Shortcutting}, we refer to the former as \emph{$i$-induced shortcuts} or just \emph{induced shortcuts}
and the latter as
\emph{standard $i$-shortcuts} or just \emph{standard shortcuts}. Observe that for all $i$-induced shortcuts $(a,b)$, where $b$ is a special child
of $a$, there is an edge in $\mathcal F_i$ from $a$ or an ancestor of $a$ to $b$ or a descendant of $b$.
For each special node $u$, we keep a BBST with a leaf for each $i$ containing the $i$-induced
shortcuts to a special child and/or parent (if they exist).

\subsection{Faster iterator}\label{subsec:Iterator}
Now let us present the level $i$-iterator. It starts at the root $v$ of a tree in $\mathcal F_i$, i.e., $v$ is a level
$(i+1)$ node of
$\mathcal C_L$. It performs a DFS of the subtree of $\mathcal C_L$ rooted at $v$ with the following modification: if
it visits a black node $w$ for which $\edge(w)[i] = 0$, it backtracks; if it visits a special node
$w'$ with an $i$-induced shortcut to a special child, it visits this special child instead of the children of $w'$
in $\mathcal C_L$. When it reaches a leaf $l$ of $\mathcal C_L$, it identifies the group of incident level $i$-edges
with a binary search in the BBST associated with $l$ and then iterates over these edges.
This completes the description of the level $i$-iterator.
Lemma~\ref{Lem:Iterator} below shows the performance of the level $i$-iterator.
To prove it, we need two additional lemmas.
\begin{Lem}\label{Lem:WhiteTree}
Any node of $\mathcal C_L$ has only $O((\log n)^{3\epsilon})$ black-induced children.
\end{Lem}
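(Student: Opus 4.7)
The plan is to bound the depth $d$ of the white subtree $W$ rooted at $u$, whose leaves are exactly the black-induced children of $u$. Since every node of $\mathcal{C}_L$ has at most two children, $W$ has at most $2^d$ leaves, so it suffices to show $d \leq 3\lfloor\epsilon\log\log n\rfloor + O(1)$.

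I would fix an arbitrary root-to-leaf path $P$ in $W$ and split according to whether $P$'s interior visits a non-rank node (an interior node of a top or buffer tree, or an interior non-root node of a bottom tree). In the non-rank case, the only parent-child adjacencies in $\mathcal{C}_L$ between a rank and a non-rank node involve a type-$4$ black BBST root (either the non-rank top/buffer-tree root that is a child of a $\mathcal{C}$-node, or the rank bottom-tree root whose children are non-rank), so a white-interior path cannot switch between rank and non-rank; hence the non-rank portion of $P$ lives entirely inside a single BBST interior, where type-$3$ leaves and type-$4$ nodes at BBST-depth $\lfloor\epsilon\log\log n\rfloor$ terminate $P$ within $\lfloor\epsilon\log\log n\rfloor$ steps. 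This case gives $|P| \leq \lfloor\epsilon\log\log n\rfloor + O(1)$, strictly better than the target.

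In the all-rank-interior case, I would count the rank nodes on $P$ using Lemma~\ref{Lem:RanksCL}. By type $2$, the ranks cannot cross a multiple of $\lfloor\epsilon\log\log n\rfloor$ (or else a type-$2$ black interior node appears), so the ranks on $P$ lie in one band of width $\lfloor\epsilon\log\log n\rfloor$ and there are at most $\lfloor\epsilon\log\log n\rfloor$ distinct rank values. By type $1$, the $\mathcal{C}$-node levels on $P$ also sit in a single band of width $\lfloor\epsilon\log\log n\rfloor$, so $P$ contains at most $\lfloor\epsilon\log\log n\rfloor$ $\mathcal{C}$-nodes. The second part of Lemma~\ref{Lem:RanksCL} bounds equal-rank pairs by two per $\mathcal{C}$-interval, for a total of $2\lfloor\epsilon\log\log n\rfloor + O(1)$ ties. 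Summing distinct ranks and ties yields at most $3\lfloor\epsilon\log\log n\rfloor + O(1)$ rank nodes on $P$, hence $|P| \leq 3\lfloor\epsilon\log\log n\rfloor + O(1)$.

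The main obstacle is the all-rank case: a per-lazy-local-tree analysis would naively give depth $O((\log\log n)^2)$, which is too weak. The key saving is to count distinct ranks \emph{globally} on $P$ via the single-band restriction from type $2$, rather than summing per $\mathcal{C}$-interval, and to charge equal-rank pairs only once per $\mathcal{C}$-interval via Lemma~\ref{Lem:RanksCL}; the partial intervals before the first and after the last $\mathcal{C}$-node on $P$ contribute only $O(1)$ extra ties. Combining both cases gives $d \leq 3\lfloor\epsilon\log\log n\rfloor + O(1)$ and thus at most $O((\log n)^{3\epsilon})$ black-induced children.
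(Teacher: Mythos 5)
Your proof is correct and follows essentially the same approach as the paper's: split according to whether the relevant portion of the root-to-leaf path lies inside a BBST interior (handled by type~$3$ and type~$4$ black nodes) or consists of rank nodes (handled by type~$1$ and type~$2$ black nodes together with Lemma~\ref{Lem:RanksCL}), and in each case bound the depth of the white subtree by $O(\epsilon\log\log n)$ so that binarity gives $O((\log n)^{3\epsilon})$ black-induced children. Your write-up makes the band and tie-counting explicit where the paper's proof is terse, but the underlying decomposition and the source of the constant $3\epsilon$ are the same.
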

\begin{proof}
Let $u$ be a node of $\mathcal C_L$. If $u$ is a non-leaf node of a bottom, buffer, or top tree, the lemma follows
from the definition of type $3$ and $4$ black nodes. Otherwise, $u$ is a rank node. For
any black-induced child $v$ of $u$, Lemma~\ref{Lem:RanksCL} and the definition of type $1$ and $2$ black nodes imply
that $v$ has depth at most $3\epsilon\log\log n$ in the subtree of $\mathcal C_L$ rooted at $u$. As $\mathcal C_L$
is binary, the lemma follows.
\end{proof}
\begin{Lem}\label{Lem:SpecialDist}
For any $i$-induced shortcut $(a,b)$, the simple $a$-to-$b$ path in $\mathcal C_L$ has length $O((\log\log n)^4)$.
\end{Lem}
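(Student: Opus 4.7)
The plan is to decompose the simple $a$-to-$b$ path $P$ in $\mathcal{C}_L$ into segments separated by the $\mathcal{C}$-nodes that appear on $P$, then show that both the number of segments and the length of each segment are $O((\log\log n)^2)$. Throughout, let $C:=\lfloor\log\log n\rfloor\lfloor\epsilon\log\log n\rfloor$, so that type~1 special $\mathcal{C}$-nodes sit at levels that are multiples of $C$ and type~3 special nodes are the light-tree rank nodes whose rank is a multiple of $C$.

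To bound the number of $\mathcal{C}$-nodes on $P$, I would use that these nodes form an ancestor chain in $\mathcal{C}$ with strictly increasing levels, and that none of the interior ones is type~1 special. Hence no level strictly between the topmost and bottommost $\mathcal{C}$-node on $P$ is divisible by $C$, so the level range spans at most $C$, yielding at most $C+1 = O((\log\log n)^2)$ $\mathcal{C}$-nodes on $P$. For the per-segment bound, each segment (between two consecutive $\mathcal{C}$-nodes, or the prefix/suffix containing $a$ or $b$ when that endpoint is not a $\mathcal{C}$-node) lies inside a single lazy local tree $L(u)$. If the segment is in $T_h(u)$, then $T_h(u)$ is a Section~\ref{subsec:LocalTrees} local tree over at most $\log^\epsilon n$ heavy children and so has depth $O(\log\log n)$. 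Otherwise the segment is in $T_l(u)$ and traverses at most one top tree, one rank tree, and one bottom tree (or the buffer tree); the top, bottom, and buffer trees are BBSTs of polylogarithmic size and each contribute $O(\log\log n)$.

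The main obstacle will be the rank tree traversal, since it is the only piece that could a priori cost $\Theta(\log n)$. I would handle it by noting that the interior nodes of a rank tree inside $T_l(u)$ are precisely rank nodes of a light tree whose ranks strictly increase along the traversal, so each is a candidate to be a type~3 special node. Because no interior node of $P$ is special, no interior rank on this stretch is divisible by $C$, which bounds the rank span, and hence the depth, of the rank tree portion by $C = O((\log\log n)^2)$. Multiplying this per-segment bound $O((\log\log n)^2)$ by the segment count $O((\log\log n)^2)$ then gives the claimed $|P| = O((\log\log n)^4)$.
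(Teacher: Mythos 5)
Your proof is correct and takes essentially the same approach as the paper's: decompose the ancestor-descendant path into segments separated by $\mathcal C$-nodes, bound the number of $\mathcal C$-nodes by $O((\log\log n)^2)$, and bound each segment by $O((\log\log n)^2)$ using that interior rank nodes on a light-tree segment increase in rank by exactly~1 and cannot be type~3 special. You fill in the details (the level-divisibility argument for counting $\mathcal C$-nodes and the rank-divisibility argument bounding the rank-tree stretch) that the paper leaves implicit behind ``Clearly'' and an unstated appeal to type~3 special nodes.
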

\begin{proof}
Let $P$ be the simple $a$-to-$b$ path in $\mathcal C_L$. Clearly, $P$ contains only $O((\log\log n)^2)$
$\mathcal C$-nodes. Let $P'$ be a subpath of $P$ containing no $\mathcal C$-nodes. Then $P'$ is either contained
in a heavy or a light tree. In the former case, $|P'| = O(\log\log n)$. In the latter case, we encounter at
most $O(\log\log n)$ nodes of buffer, bottom, and top trees on $P'$. Since consecutive rank nodes of $P'$ differ in rank
by exactly $1$ (as they all belong to a light tree and hence to a rank tree), we encounter at most
$O((\log\log n)^2)$ rank nodes on $P'$ so $|P'| = O((\log\log n)^2)$.
\end{proof}
\begin{Lem}\label{Lem:Iterator}
The level $i$-iterator above traverses a tree in $\mathcal F_i$ with $k$ leaves in $O(k\log n/\log\log n)$ time. The
time to visit the first $k'$ leaves is $O(k'\log n/\log\log n + \log n)$.
\end{Lem}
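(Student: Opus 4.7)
The plan is to amortize the iterator's running time across the $O(k)$ edges of the binary tree $T\subseteq\mathcal F_i$: the DFS traverses each edge at most twice, so I will analyze each ``walk'' in $\mathcal C_L$ that carries the DFS from an $\mathcal F_i$-node $x$ down to one of its $\mathcal F_i$-children $y$ and argue that, amortized, such a walk costs $O(\log n/\log\log n)$ time.

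For a single walk let $s_1,\ldots,s_t$ be the special nodes of $\mathcal C_L$ encountered in order on the $\mathcal C_L$-path from $x$ to $y$. At each $s_j$ the iterator performs an $O(\log\log n)$ BBST lookup to decide whether the $i$-induced shortcut from $s_j$ to its next special descendant on the path is present (equivalently, whether that descendant is the unique special child of $s_j$ with $\edge(\cdot)[i]=1$); if so, it jumps to $s_{j+1}$, and otherwise it walks along $\mathcal C_L$-edges, which by Lemma~\ref{Lem:SpecialDist} takes $O((\log\log n)^4)$ time per forced segment. The first step is to bound $t=O(\log n/(\log\log n)^2)$: type~$1$ special nodes are $\mathcal C$-nodes at $\mathcal C$-levels spaced by $\Theta((\log\log n)^2)$, contributing at most $O(\log n/(\log\log n)^2)$; type~$2$ adds at most the terminal leaf; and for type~$3$ I would combine Lemma~\ref{Lem:RanksCL} with the observation that each light-tree crossing drops the rank by $\Omega(\log\log n)$, limiting the walk to $O(\log n/\log\log n)$ light-tree crossings while the rank span inside them totals $O(\log n)$, so the number of light-tree rank nodes at type-$3$ ranks is again $O(\log n/(\log\log n)^2)$. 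Multiplied by the $O(\log\log n)$ BBST cost this yields a per-walk contribution of $O(\log n/\log\log n)$.

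The main obstacle is bounding the total cost of the forced segments, since a single walk can accumulate many of them. For each walk let $\sigma$ count the special nodes on it without a shortcut; such a node has two or more special children with $\edge(\cdot)[i]=1$, so it is a branching node of the ``positive special tree'' $S_i$ formed by the special nodes with $\edge(\cdot)[i]=1$ inside the explored subtree, each parented by its special parent. Since the leaves of $S_i$ are exactly the $k$ $i$-induced leaves of $T$, $S_i$ has at most $k-1$ branching nodes. Since $\mathcal C_L$ is binary, any fixed $\mathcal C_L$-node lies on at most a constant number of $\mathcal F_i$-edge walks (the number of $\mathcal C_L$-children whose subtrees contain an $\mathcal F_i$-descendant of that node is at most two), so $\sum_{\text{walks}}\sigma=O(k)$ and the total cost of forced segments is $O(k(\log\log n)^4)$, which is absorbed into $O(k\log n/\log\log n)$ once $n$ is large enough that $(\log\log n)^5\le\log n$.

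Summing the contributions proves the first bound of $O(k\log n/\log\log n)$. For the first $k'$ leaves only $O(k')$ walks have been executed and $\sum\sigma=O(k')$, yielding $O(k'\log n/\log\log n)$; the additive $O(\log n)$ in the stated bound covers the initial descent from the root of the $\mathcal F_i$-tree, a single walk of $\mathcal C_L$-length $O(\log n)$ whose cost is not yet amortized against any visited leaf.
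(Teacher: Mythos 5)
Your high-level plan — amortize over the $O(k)$ edges of the $\mathcal F_i$-tree, bound the special-node count per walk by $O(\log n/(\log\log n)^2)$, and charge the walks that fail to find a shortcut to branching nodes of a ``positive special tree'' $S_i$ — is a legitimate and genuinely different decomposition from the paper's. The paper instead forms the tree $T_L\subseteq\mathcal C_L$ obtained by expanding each $\mathcal F_i$-edge into its $\mathcal C_L$-path and separately counts special nodes of $T_L$, non-special nodes of $T_L$, and nodes of $\mathcal C_L$ outside $T_L$. Your $S_i$-branching argument for $\sum\sigma=O(k)$ is correct and arguably cleaner than the paper's implicit claim that all visited nodes strictly between the boundary special nodes $a',b'$ on each path are connected by shortcuts.

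However, there is a real gap in the forced-segment accounting. You write that when a special node has no $i$-induced shortcut the iterator ``walks along $\mathcal C_L$-edges, which by Lemma~\ref{Lem:SpecialDist} takes $O((\log\log n)^4)$ time per forced segment.'' But the iterator is a \emph{DFS}, not a guided walk: at every non-special node on the forced segment it also descends into the sibling subtree that is off the $T_L$-path, because the DFS cannot tell in $O(1)$ time which child leads to the next special node. Along such an off-path descent it only stops when it reaches a black node $w$ with $\edge(w)[i]=0$, so the cost of one forced segment is not the path length but the path length times the size of the white-induced subtree hanging off each path node. This is exactly what Lemma~\ref{Lem:WhiteTree} is for: it bounds the number of black-induced children of any node by $O((\log n)^{3\epsilon})$, giving a per-segment cost of $O((\log n)^{3\epsilon}(\log\log n)^4)$ and a total off-$T_L$ cost of $O(k(\log n)^{3\epsilon}(\log\log n)^4)$, which is then absorbed for $\epsilon$ small. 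Your proof never invokes Lemma~\ref{Lem:WhiteTree} and never mentions the off-path exploration, so as written the $O((\log\log n)^4)$ per-segment claim is unjustified. (The boundary portions of each walk before $s_1$ and after $s_t$ have the same issue.) A secondary, more minor imprecision: in the $k'$-leaf case, the partially built tree $T'$ has $O(k')$ fully explored subtrees but the spine from the root to the current leaf can contribute $\Theta(\log n)$ additional walks, not ``a single walk''; your additive $O(\log n)$ does cover it, via Lemma~\ref{Lem:HeightCL}, but the stated reasoning (only $O(k')$ walks executed) is not correct.
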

\begin{proof}
Correctness follows easily from the definition of $\edge$-bitmaps, $i$-induced shortcuts, and BBSTs associated with
leaves of $\mathcal C_L$ so let us focus on the time bound to traverse a $k$-leaf tree $T$ in
$\mathcal F_i$. Let $T_L$ be the tree in $\mathcal C_L$ obtained by replacing each edge $(a,b)\in T$ with the
corresponding simple path $P$ in $\mathcal C_L$ between $a$ and $b$. By
Lemmas~\ref{Lem:HeightCL} and~\ref{Lem:RanksCL}, there can only be
$O(\log n/(\log\log n)^2)$ special nodes on such a path $P$. Hence, since $T$ has no degree $2$-vertices, the total
number of special nodes and hence $i$-induced shortcuts traversed by the level $i$-iterator in $T_L$ is
$O(k\log n/(\log\log n)^2)$. For each special node visited, $O(\log\log n)$ time is spent on binary search to find the
next $i$-induced shortcut, if it exists. Hence, the total time spent on visiting special nodes and traversing
$i$-induced shortcuts is $O(k\log n/\log\log n)$.

We will now show that the number of additional nodes visited by the DFS is $O(k(\log n)^{3\epsilon}(\log\log n)^4)$. Since
only constant time is spent for each such node, this will show the first part of the lemma.
First we bound the number of visited nodes of $T_L$ which are not special.
Let $(a,b)$ and $P$ be as above. If we traverse $P$ from $a$ then it follows from Lemma~\ref{Lem:SpecialDist} that
after at most $O((\log\log n)^4)$ nodes, we will reach either $b$ or a special node $a'$.
Similarly, if we traverse $P$ from $b$ then after at most $O((\log\log n)^4)$ nodes, we
will reach either $a$ or a special node $b'$. If $a'$ and
$b'$ exist then all nodes visited by the DFS on $P[a',b']$ are special nodes connected by $i$-induced shortcuts.
Summing over all such paths $P$, it
follows that the total number of nodes visited on $T_L$ which are not special is $O(k(\log\log n)^4)$.

Finally, let us bound the number of nodes of $\mathcal C_L$ visited by the DFS which are not on $T_L$. Consider a visited
node $u\in T_L$ and let $v\notin T_L$ be a visited node such that $u$ is the nearest ancestor of $v$ belonging to $T_L$.
Note that there is no $i$-induced shortcut from $u$ to a special child since then the DFS would have traversed
this shortcut instead of the children of $u$ in $\mathcal C_L$. In particular, there are only $O(k(\log\log n)^4)$
choices for $u$. Furthermore, all interior nodes on the simple path from $u$ to $v$ in $\mathcal C_L$ are white since
any black node $w$ would have $\edge(w)[i] = 0$ (as $w\notin T_L$),
meaning that the DFS would have backtracked
before reaching $v$. By Lemma~\ref{Lem:WhiteTree}, there are only $O((\log n)^{3\epsilon})$ choices
for $v$ for each $u$. Hence, the total number of nodes visited which are not on $T_L$ is
$O(k(\log n)^{3\epsilon}(\log\log n)^4)$. This shows the first part of the lemma.

For the second part, consider a partially grown DFS tree $T'$ which has visited $k'$ leaves. For every node of $T'$
having two children, at least one of the two subtrees rooted at the children is fully explored. Hence, $T'$ consists of
a path $P$ from the root of $T$ to a leaf of $T$ with fully explored subtrees attached to $P$. The same argument as
above shows that the total time to explore these subtrees is $O(k'\log n/\log\log n)$. By Lemma~\ref{Lem:HeightCL},
it takes $O(\log n)$ time to explore $P$ (the number of special nodes on $P$ is $O(\log n/(\log\log n)^2)$ so we
only spend a total of $O(\log n/\log\log n)$ time on binary searches for these nodes).
\end{proof}

It follows from Lemma~\ref{Lem:Iterator} that the level $i$-iterator spends $O(\log n/\log\log n)$ amortized time
per edge visited plus additional $O(\log n)$ time if a replacement path was found (if such a path is not found, an entire
tree in $\mathcal F_i$ is traversed). The $O(\log n/\log\log n)$ amortized time per edge
is paid for by the increase in the level of the edge and the $O(\log n)$ time is paid for by the deletion of an edge
in $G$ since at most one replacement path is found for such an edge.

It remains to describe how colors, shortcuts (standard and induced) and
$\edge$-bitmaps are maintained when $\mathcal C$
(and hence $\mathcal C_L$) is updated and when edges of $G$ are added/removed or change level. First we deal with
changes to $\mathcal C$. The following lemma will prove useful.
\begin{Lem}\label{Lem:IncidentInducedShortcuts}
Given $\edge$-bitmaps of black nodes and given a special node $u$, we can find the induced shortcuts between $u$ and
its special parent (if any) in $O(\log n)$ time. For any $i$, we can find the $i$-induced shortcut from $u$ to
a special child or determine that no such shortcut exists in $O((\log n)^{3\epsilon}(\log\log n)^4)$ time.
\end{Lem}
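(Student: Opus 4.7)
The plan is to handle each part by a short walk in $\mathcal C_L$---upward from $u$ in the first part, downward into $u$'s subtree in the second---that reads $\edge$-bitmaps only at black nodes. Lemma~\ref{Lem:SpecialDist} will bound the length of the relevant path, and Lemma~\ref{Lem:WhiteTree} will bound the cost of reconstructing the $\edge$-bitmap of a white node from those of its $O((\log n)^{3\epsilon})$ black-induced children.

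For the first part, write $S(p)$ for the set of special children of $p$. The identity I would exploit is that, since every leaf of $\mathcal C_L$ is special (type~$2$) and every leaf below $p$ descends from a unique element of $S(p)$,
\[
\bigvee_{v'\in S(p)\setminus\{u\}}\edge(v')\ =\ \bigvee_{s}\edge(s),
\]
where the right-hand OR ranges over all siblings $s$ in $\mathcal C_L$ of nodes on the $u$-to-$p$ path. I would therefore walk up from $u$ one edge at a time, maintaining an accumulator bitmap $m$, and at each step OR the sibling's $\edge$-bitmap into $m$ (directly if the sibling is black, otherwise computed in $O((\log n)^{3\epsilon})$ time from the sibling's black-induced children). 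The walk terminates at the first special ancestor, which is guaranteed to exist because every root of $\mathcal C_L$ is a $\mathcal C$-node of level $0$ and hence type-$1$ special. The $u$-to-$p$ path has length $O((\log\log n)^4)$ by the argument used in Lemma~\ref{Lem:SpecialDist}, which only requires that no special node lie strictly between two consecutive ones. The output is the bitmap $\edge(u)\ \text{AND NOT}\ m$, and the total cost is $O((\log n)^{3\epsilon}(\log\log n)^4)=O(\log n)$ for a sufficiently small constant $\epsilon$.

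For the second part, I would descend from $u$ through black-induced children guided by bit $i$. If $\edge(u)[i]=0$, I immediately return ``none''. Otherwise, at each visited black node $w$ I enumerate the $O((\log n)^{3\epsilon})$ black-induced children of $w$ and inspect their $i$-th $\edge$-bits. If exactly one such child $c$ has $\edge(c)[i]=1$, I move to $c$, returning it if $c$ is special and otherwise recursing; if two or more children have the bit set, I return ``none''. Correctness rests on two observations: (i)~special nodes are black and no black node lies strictly between $w$ and any of its black-induced children, so a non-branching descent crosses no special node before the first one it reports, making that node a special child of $u$; and (ii)~at a branching non-special $w$, each of the two children with $\edge[\cdot][i]=1$ roots a disjoint subtree containing a special child of $u$ with a level-$i$ edge, so no unique shortcut can exist. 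Lemma~\ref{Lem:SpecialDist} bounds the descent path by $O((\log\log n)^4)$ nodes, yielding the claimed $O((\log n)^{3\epsilon}(\log\log n)^4)$ total.

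The most delicate step will be observation (ii): identifying two distinct special children of $u$ from two branching black-induced children requires chaining the facts that no special node lies between $u$ and $w$ (otherwise the descent would have stopped earlier), nor strictly between $w$ and its black-induced children, so that each of the two branches genuinely contributes a fresh special child.
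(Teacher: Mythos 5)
Your proof is correct and takes essentially the same approach as the paper's, invoking Lemmas~\ref{Lem:WhiteTree} and~\ref{Lem:SpecialDist} in exactly the same roles and descending through black-induced children guided by the $i$th bit in part two. The only cosmetic difference is in part one, where you walk up from $u$ to $p$ accumulating sibling bitmaps, whereas the paper performs a single top-down DFS from $p$ that backtracks at $u$ and at black non-ancestors of $u$; both yield the same accumulator $m$ and the same $O(\log n)$ bound.
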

\begin{proof}
We first walk up $\mathcal C_L$ from $u$ to identify its special parent $p$. By Lemma~\ref{Lem:SpecialDist},
this takes $O((\log\log n)^4)$ time.
Then we perform a DFS in the subtree of $\mathcal C_L$ rooted at $p$ and backtrack if we encounter $u$ or a black node
which is not an ancestor of $u$. If any such black node is encountered for which the $i$th bit of its $\edge$-bitmap
is $1$ then we know that there should not be an $i$-induced shortcut between $u$ and $p$. Otherwise there should be iff
$\edge(u)[i] = 1$.
Let $m$ be the bitmap obtained by taking the bitwise 'or' of the $\edge$-bitmaps of visited black nodes not on the
$u$-to-$p$ path. By Lemmas~\ref{Lem:WhiteTree} and~\ref{Lem:SpecialDist}, finding $m$ takes
$O((\log n)^{3\epsilon}(\log\log n)^4)$ time. Now, there is an $i$-induced shortcut between $u$ and $p$ iff
$m[i] = 0$ and $\edge(u)[i] = 1$. Hence, all induced shortcuts between $u$ and
$p$ can be found in $O(\log n)$ time.

To find the $i$-induced shortcut (if any) to a special child of $u$, we make a DFS from $u$ which backtracks when reaching
a black node. Suppose exactly one visited black node $w$ has $\edge(w)[i] = 1$ (otherwise, there is no
$i$-induced shortcut). If $w$ is special, we have identified the $i$-induced shortcut $(w,u)$. Otherwise, we recurse on $w$.
As above, total time for this is $O((\log n)^{3\epsilon}(\log\log n)^4)$.
\end{proof}

\subsection{Structural changes}
Let us now describe how shortcuts and $\edge$-bitmaps are maintained after structural changes to $\mathcal C_L$.
It follows from Lemma~\ref{Lem:WhiteTree} that for each update to $\mathcal C_L$, we can update colors, standard shortcuts
and $\edge$-bitmaps in $O((\log n)^{3\epsilon})$ time. From the results in Section~\ref{subsec:MaintainLazy}, this will not
affect the overall time bound (if we pick constant $\alpha$ sufficiently large). In the following, we thus only consider
updating induced shortcuts.

We shall restrict our attention to structural changes caused by a \texttt{delete}-operation as \texttt{insert}
corresponds to merging two clusters (or none), a type of update that needs to be supported during a \texttt{delete}.

Recall that after a \texttt{delete}-operation, $\mathcal C$ is updated as follows: some children of a node $u$ are removed
and merged into a single node; this node is either added as a child of $u$ (if a replacement path was found) or it is added
as a child of a new node $u'$ which is added as a child of the parent $p$ of $u$ and the process is repeated recursively on
$p$ (if a replacement path was not found). We observe that all
$\mathcal C$-nodes whose children are updated are contained in two
leaf-to-root paths in $\mathcal C$ after the \texttt{delete}-operation has been executed.

There are two types of induced shortcuts that need to be updated, those incident to a type $1$ special node and those
descending from a type $3$ special node and not ascending from a type $1$ special node (see definitions of types in
Section~\ref{subsec:IndShortcuts}).
Below we show how to update the latter.

For the former, it follows from the above
that we only need to focus on type $1$ special nodes on a leaf-to-root path $P$ in $\mathcal C_L$ (there are two paths but they are handled
in the same manner). Let
$u_1,\ldots,u_k$ be the sequence of special nodes as we move up $P$ during the \texttt{delete}-operation (some of them may be new or merged
nodes and hence do not exist before the \texttt{delete}-operation).
By Lemma~\ref{Lem:IncidentInducedShortcuts}, we can find all induced shortcuts descending from $u_1$ in $O((\log n)^{1 + 3\epsilon}(\log\log n)^4)$ time.
When we reach
$u_j$, $j > 1$, we compute induced shortcuts between $u_{j-1}$ and $u_j$. By Lemma~\ref{Lem:IncidentInducedShortcuts},
this takes $O(\log n\log\log n)$ time (including binary searches in the BBSTs of $u_{j-1}$ and $u_j$) for a total of
$O(\log^2n/\log\log n)$ over all $j$ which the \texttt{delete}-operation can pay for. We also compute induced shortcuts descending from $u_j$ for those
$i$ for which $\edge(u_j)[i] = 1$ and $\edge(u_{j-1})[i] = 0$. Total time over
all $j$ is $O((\log n)^{1 + 3\epsilon}(\log\log n)^4)$ since if $\edge(u_j)[i] = 1$ then $\edge(u_{j'})[i] = 1$ for all $j' > j$,
implying that the second part of
Lemma~\ref{Lem:IncidentInducedShortcuts} is applied at most once for each $i$. Note that all $i$-induced shortcuts from $u_j$ to
a special child which have not been identified by the second part of
Lemma~\ref{Lem:IncidentInducedShortcuts} must have $\edge(u_j)[i] = \edge(u_{j-1})[i] = 1$ and hence must connect $u_j$ to $u_{j-1}$ which we have computed above.
Hence, we correctly compute all induced shortcuts incident to special nodes
on $P$ and we do so within the desired time bound.

Now we show how to maintain induced shortcuts descending from a type $3$ special node $p$ and not ascending from a type $1$ special node. We may assume that
$p$ is not on one of the leaf-to-root paths considered above.
Then the only structural changes to $\mathcal C_L$ that may require such shortcuts to be updated are
\begin{enumerate}
\item a type $3$ special node is created or deleted, or
\item a leaf is removed from a bottom tree in the light tree containing $p$.
\end{enumerate}
A type $3$ special node can pay any polylogarithmic amount when it is created/deleted so consider updating induced shortcuts descending from $p$ when a leaf
$u$ is removed from a bottom tree $B$ in the light tree $T_l(w)$ containing $p$.
To handle this case, we will assume that each rank node of $T_l(w)$ has
$\log n/\log\log n$ credits for each $i$ for which it is an $i$-induced
branch node. To see that this assumption can be made, first observe that when a buffer tree is turned into
a bottom tree, it can pay any polylogarithmic amount if we pick $\alpha$
sufficiently big. This is also the case when a new rank node of $T_l(w)$ is
created/deleted. Since we never add but only remove leaves from
bottom trees, the only other way a rank node of $T_l(w)$ can become an
$i$-induced branch node is if some edge of $G$ has its level increased to $i$. Such an edge can only contribute with two $i$-induced branch nodes to
$\mathcal F_i$ (one for each of its endpoints) so we may add $\log n/\log\log n$
credits to the two new $i$-induced branch nodes which the level increase can
pay for. This shows the desired.

Let $\branch(u)$ be the bitmap where $\branch(u)[i] = 1$ iff there
is an $i$-induced branch node $v\neq u$ on the path from $u$ to $p$.
By Lemmas~\ref{Lem:WhiteTree} and~\ref{Lem:SpecialDist},
we can form $\branch(u)$ in $O((\log n)^{3\epsilon}(\log\log n)^4)$
time which the removal of $u$ from $B$ can pay for.
For each $i$ for which $\branch(u)[i] = 1$, the removal of $u$
may require us to compute an $i$-induced shortcut descending from $p$.
Since a branch node is removed in the process, we can spend its credits to
pay for applying Lemma~\ref{Lem:IncidentInducedShortcuts} to find this
shortcut. We can binary search for each
of the $k$ $1$-entries of $\branch(u)$ in $O(k\log\log n)$ time; for instance, to determine whether the lower half of $\branch(u)$ has any $1$-bits, we can take the bitwise 'and' of $\branch(u)$ and a precomputed bitmap having $1$-bits in the lower half and $0$-bits in the upper half. The 'and' is $1$ iff there are $1$-bits in the lower half of $\branch(u)$.

Now consider an $i$ for which $\branch(u)[i] = 0$. We may assume that a
bitmap $\induced(p)$ associated with $p$ is maintained
where $\induced(p)[j] = 1$ iff
$p$ has a $j$-induced shortcut to a special child. If $\edge(u)[i] = 0$ or
$\induced(p)[i] = 0$, nothing needs to be done for $i$ so assume
$\edge(u)[i] = 1$ and $\induced(p)[i] = 1$. Then we delete the $i$-induced
shortcut descending from $p$. If there is an $i$-induced shortcut from $p$
to its special parent $p'$, we remove it too and recurse on $p'$; the
recursion stops when we reach a special node $q$
without an $i$-induced shortcut to its special parent $q'$. Each induced shortcut can be removed
in $O(\log\log n)$ time using binary search in the associated BBSTs.
By adding $\log\log n$ credits to an induced shortcut when it is created,
we can pay for all deletions of induced shortcuts. If $q'$ is a rank node of
$T_l(w)$, we may need to add an $i$-induced shortcut from $q'$ to a special
child. This can only happen if an $i$-induced branch node between $q$ and
$q'$ disappears and as above, we can spend its credits to pay for finding
this shortcut.

\subsection{Non-structural changes}
Above we dealt with updates of shortcuts due to structural changes in $\mathcal C_L$. We now handle updates when
leaves of $\mathcal F_i$ appear or disappear due to edge level changes.

\paragraph{Edge deletions:}
When a level $i$-edge $e$ is deleted
(possibly due to its level being increased to $i+1$), $\edge(u)[i]$ might change from $1$ to $0$ for
one of its endpoints $u$ which will then no longer be an $i$-induced leaf of $\mathcal F_i$. We describe how
to update other $\edge$-bitmaps accordingly and remove some of the $i$-induced shortcuts. The following is
done for $u$. If there are still level $i$-edges incident to $u$ then no updates are needed.
Otherwise, all $i$-induced shortcuts on the simple path in $\mathcal C_L$ from $u$ to its $i$-induced parent $p$
should be removed. Since $u$ is a leaf of $\mathcal C_L$, it is a special node. We traverse $i$-induced
shortcuts from $u$ to ancestors until we reach a special node $v$ without an $i$-induced shortcut to an ancestor.
Since $u$ is the only $i$-induced leaf below $v$, we delete all shortcuts traversed. We also set $\edge(v)[i] \leftarrow 0$
for all black nodes $v$ between $u$ and $v$ (including $v$) in
$O(\log n/\log\log n)$ time by traversing standard $i$-shortcuts between $u$ and $v$.

We then traverse black nodes up from $v$ in $\mathcal C_L$ and stop if we find a special node $w$. Whenever we visit
a black node $b$, we perform a DFS in the subtree of $\mathcal C_L$ rooted at $b$, backtracking at descending black
nodes. If each black node $b'$ visited below $b$ has $\edge(b')[i] = 0$, $u$ was the only leaf
of $\mathcal C_L$ below $b$ with an incident level $i$-edge so we set $\edge(b)[i]\leftarrow 0$ and proceed
up to the next black node. Conversely, if some black node $b'$ visited below $b$ has $\edge(b')[i] = 1$,
$p$ must be below $b$ and no more bitmaps need to be updated.

Having updated the bitmaps and removed all $i$-induced shortcuts below $v$, we need to check if an $i$-induced
shortcut should be added from $w$ to one of its descendants. By Lemma~\ref{Lem:IncidentInducedShortcuts}, this takes
$O((\log n)^{3\epsilon}(\log\log n)^4)$ time which can be paid for
by the deletion of $e$; here we can also afford to add $\log\log n$ credits
to the shortcut if it was added.

\paragraph{Edge insertions}
Now suppose a level $i$-edge $e$ is inserted. To update $\edge$-bitmaps and add new $i$-induced shortcuts,
we do the following for each endpoint $u$ of $e$. If other level $i$-edges are incident to $u$ then nothing needs to be
done as $u$ is already a leaf of $\mathcal F_i$ so assume otherwise. Inserting $e$ corresponds to updating $\mathcal F_i$
by adding a new edge $(u,p)$, where $p$ is the $i$-induced parent of $u$. Hence, we need to add $i$-induced shortcuts
between $u$ and $p$. Suppose $u$ has an ancestor black node $v$ such that $\edge(v)[i] = 1$. We traverse
standard shortcuts up from $u$ and stop when we identify the first such $v$. Let $u_1,\ldots,u_k$ be the ordered (possibly
empty) sequence of special nodes visited from $u$ to $v$. Since $v$ already has a descending leaf incident to a level $i$-edge
and $u_k$ does not, $p$ must be on the $u_k$-to-$v$ path in $\mathcal C_L$. Hence, the new $i$-induced shortcuts to be added are
$(u_1,u_2),(u_2,u_3),\ldots,(u_{k-1},u_k)$. We also set to $1$ the $i$th bit of the $\edge$-bitmaps of all black nodes visited.
By Lemmas~\ref{Lem:HeightCL} and~\ref{Lem:RanksCL}, all of this can be done in $O(\log n/\log\log n)$ time and we can also afford to add $\log\log n$ credits
to each of the $O(\log n/(\log\log n)^2)$ new $i$-induced shortcuts.

We assumed that a node $v$
with $\edge(v)[i] = 1$ was encountered. If this is not the case, it means that $u$ should not be added to an
existing tree in $\mathcal F_i$. Rather, a new tree should be formed consisting of a single edge $(u,p)$, where $p$ is the level
$(i+1)$-ancestor of $u$ in $\mathcal C_L$. Clearly, the corresponding $i$-induced shortcuts can be added and $\edge$-bitmaps
updated within the same $O(\log n/\log\log n)$ time bound.

We can now conclude with the following theorem.
\begin{theorem}
There is a deterministic data structure for fully dynamic graph connectivity which supports edge insertions/deletions
in $O(\log^2n/\log\log n)$ amortized time per update and connectivity queries in $O(\log n/\log\log n)$ worst case
time, where $n$ is the number of vertices of the graph.
\end{theorem}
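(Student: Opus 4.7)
The plan is to assemble the data structure from Sections~\ref{subsec:Lazy}--\ref{subsec:IndShortcuts}: the cluster forest $\mathcal{C}$ with the new hybrid lazy local trees (heavy tree $T_h(u)$ plus Thorup-style light tree $T_l(u)$) expanded into $\mathcal{C}_L$, equipped with standard shortcuts (Section~\ref{subsec:Shortcutting}), induced shortcuts (Section~\ref{subsec:IndShortcuts}), per-leaf BBSTs grouping incident edges by level, $\edge$-bitmaps on black nodes, and the auxiliary $\ell(u)$-tags on heavy-tree nodes. Lemma~\ref{Lem:HeightCL} guarantees that $\mathcal{C}_L$ has height $O(\log n)$, which underlies every subsequent bound.

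For a query $\texttt{connected}(u,v)$, I would climb from the leaves $u$ and $v$ to the roots of their respective trees in $\mathcal{C}_L$ and compare; applying Lemma~\ref{Lem:Shortcut} (with $i=0$, so shortcut-chasing proceeds all the way to the root) each climb takes $O(\log n/\log\log n)$ worst-case time. The two roots coincide iff $u$ and $v$ lie in the same connected component of $G=G_0$.

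For updates, I would separate \texttt{insert} and \texttt{delete}. An \texttt{insert}$(u,v)$ costs one connectivity query plus, in the worst case, a single merge of two roots, whose total cost for updating lazy local trees, colors, $\edge$-bitmaps and both shortcut systems is covered by the amortized scheme of Section~\ref{subsec:MaintainLazy} together with the non-structural shortcut maintenance analyzed above. A \texttt{delete}$(u,v)$ uses the classical level potential: every edge has $\ell(e)\in[0,\ell_{\max}]$ with $\ell_{\max}=\lfloor\log n\rfloor$, and is promoted at most $O(\log n)$ times. Lemma~\ref{Lem:Iterator} shows that the two parallel level-$i$ iterators spend $O(\log n/\log\log n)$ amortized time per explored edge plus $O(\log n)$ extra when a replacement path is actually found, and the latter is charged to the \texttt{delete} itself (which removes a unique edge from $G$). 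Charging the per-edge iterator cost to the level increase of that edge, summed over at most $\log n$ promotions, yields $O(\log^2n/\log\log n)$ amortized total work per deleted edge. The remaining per-promotion tasks — identifying the level-$(i+1)$ ancestors of both endpoints (Lemma~\ref{Lem:Shortcut}), merging the explored clusters in $\mathcal{C}_L$ (Sections~\ref{subsubsec:Merging} and~\ref{subsec:RemovingChild}), and refreshing standard and induced shortcuts along the two leaf-to-root paths in $\mathcal{C}$ — each fit into the same $O(\log n/\log\log n)$ budget using Lemmas~\ref{Lem:RanksCL}, \ref{Lem:WhiteTree}, \ref{Lem:SpecialDist}, and~\ref{Lem:IncidentInducedShortcuts}.

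The main delicacy will be verifying that every amortization scheme dovetails consistently. Concretely, I would check that the $(\log\log n)^2$ credits deposited on a node when it enters a buffer tree or a top-tree leaf, the $\log\log n$ credits placed on each induced shortcut at creation, and the $\log n/\log\log n$ credits given to each new $i$-induced branch node are all prepaid by either a \texttt{delete} operation or an edge level increase; that the constant $\alpha$ is large enough (at least $2+\epsilon$) so that rank-node creations in light trees amortize against bottom-tree leaf deletions; and that $\epsilon>0$ is small enough that overheads of the form $(\log n)^{3\epsilon}(\log\log n)^{O(1)}$ arising from Lemmas~\ref{Lem:WhiteTree} and~\ref{Lem:IncidentInducedShortcuts} stay below $O(\log n/\log\log n)$. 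Once these parameter choices are pinned down and each charged cost traced to its payer, the stated $O(\log^2n/\log\log n)$ amortized update and $O(\log n/\log\log n)$ worst-case query bounds follow.
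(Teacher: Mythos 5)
Your proposal is correct and follows essentially the same route as the paper: the theorem in the paper is stated as a summary of the machinery developed throughout Section~\ref{sec:ImprovedDataStruct}, and your reconstruction assembles exactly those ingredients (hybrid lazy local trees, Lemmas~\ref{Lem:HeightCL}--\ref{Lem:IncidentInducedShortcuts}, the two shortcut systems, the level-potential argument for deletions, and the credit schemes with $\alpha\geq 2+\epsilon$ and small $\epsilon$) in the same roles. The one point worth being explicit about — and you are — is that the query bound comes from applying Lemma~\ref{Lem:Shortcut} with $i=0$ to walk to the root.
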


\section{Concluding Remarks}\label{sec:ConclRem}
We gave a deterministic data structure for fully-dynamic graph connectivity that achieves an amortized update time of
$O(\log^2n/\log\log n)$ and a worst-case query time of $O(\log n/\log\log n)$, where $n$ is the number of vertices of
the graph. This improves the update time of Holm, de Lichtenberg, and Thorup~\cite{HLT01} and Thorup~\cite{Thorup00} by a factor of
$\log\log n$. We believe our improvement may extend to fully-dynamic minimum spanning tree, $2$-edge, and/or biconnectivity.

There is still a small gap between upper and lower bounds. For instance, for $O(\log n/\log\log n)$ query
time,~\cite{PatrascuDemaine} gives an $\Omega((\log n)^{1 + \epsilon})$ cell-probe lower bound for updates for constant
$\epsilon > 0$. Simultaneously getting $O(\log n)$ update and query time and
improving the $O(\sqrt n)$ worst-case update bound in~\cite{Sparsification} remain the main open problems for fully-dynamic
graph connectivity.

\end{document}